\undefined \DeclareGraphicsRule{*}{eps}{*}{} \else
\newtheorem{lemma}{Lemma}
\newtheorem{theorem}{Theorem}
\numberwithin{figure}{section} \numberwithin{definition}{section}
\numberwithin{observation}{section} \numberwithin{lemma}{section}\numberwithin{corollary}{section}
\numberwithin{theorem}{section} \numberwithin{proposition}{section}
\numberwithin{conjecture}{section} \numberwithin{table}{section}
\begin{document}
\title{
{The crossing numbers of $K_m\times P_n$ and $K_m\times
C_n$}\footnote{The research is supported by NSFC
(No.60973014,\ 61170303)}}
\author{
Yuansheng Yang \footnote {corresponding
author's email : yangys@dlut.edu.cn}, \ Baigong Zheng, \ Xirong Xu, \ Xiaohui Lin \\
School of Computer Science and Technology\\
Dalian University of Technology, Dalian, 116024, P.R. China }

\date{}
\maketitle
\begin{abstract}
The {\it crossing number} of a graph $G$ is the minimum number of
pairwise intersections of edges in a drawing of $G$. In this paper,
we study the crossing numbers of $K_{m}\times P_n$ and $K_{m}\times
C_n$.
\bigskip

\noindent {\bf Keywords:} {\it Crossing number}; {\it Drawing}; {\it
Complete bipartite graphs}; {\it Kronecker product}
\end{abstract}

\section{Introduction and preliminaries}

\indent \indent Let $G$ be a graph, $V(G)$ the vertex set and $E(G)$ the edge set. The crossing number of $G$, denoted by $cr(G)$, is the smallest number of pairwise crossings of edges among all
drawings of $G$ in the plane. We use $D$ to denote a drawing of a graph $G$ and $\nu(D)$ the number of crossings in $D$. It is clear that $cr(G)\leq \nu(D)$.

Let $E_1$ and $E_2$ be two disjoint subsets of an edge set $E$. The number of the crossings formed by an edge in $E_1$ and another edge in $E_2$ is denoted by $\nu_D(E_1,E_2)$ in a drawing $D$.
The number of the crossings that involve a pair of edges in $E_1$ is denoted by $\nu_D(E_1)$. Then $\nu(D)=\nu_D(E)$. By counting the numbers of crossings in $E$, we have
\begin{lemma}\label{lemma: E12}
$\nu_D(E_1\cup E_2)=\nu_D(E_1)+\nu_D(E_2)+\nu_D(E_1,E_2)$.
\end{lemma}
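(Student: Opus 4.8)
The plan is to prove the identity by partitioning the set of crossings that are counted on the left-hand side according to which of the sets $E_1$, $E_2$ contain the two edges involved in each crossing. First I would recall the conventions implicit in the definitions: every crossing in a drawing $D$ is an intersection point of exactly two edges, so it is naturally associated with the unordered pair of edges $\{e,f\}$ that realize it; $\nu_D(E_1)$ counts those crossings with $\{e,f\}\subseteq E_1$, $\nu_D(E_2)$ counts those with $\{e,f\}\subseteq E_2$, and $\nu_D(E_1,E_2)$ counts those crossings having one edge in $E_1$ and the other in $E_2$.

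Next I would fix an arbitrary crossing contributing to $\nu_D(E_1\cup E_2)$, realized by edges $e$ and $f$, both lying in $E_1\cup E_2$. Because $E_1$ and $E_2$ are disjoint, each of $e$ and $f$ lies in exactly one of $E_1$, $E_2$, and hence precisely one of the following holds: both $e,f\in E_1$; both $e,f\in E_2$; or one of $e,f$ lies in $E_1$ and the other in $E_2$. These three cases are mutually exclusive and exhaustive, so they partition the crossings counted by $\nu_D(E_1\cup E_2)$ into three classes, counted respectively by $\nu_D(E_1)$, $\nu_D(E_2)$, and $\nu_D(E_1,E_2)$. Adding the sizes of the three parts yields the claimed equation, and conversely every crossing counted by one of the three terms on the right is a crossing of $E_1\cup E_2$, so no crossing is omitted.

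There is essentially no serious obstacle here; the only point deserving a word of care is that in a general (not necessarily good) drawing a single pair of edges may cross more than once. Since we are counting intersection points, each of which still determines a single unordered pair of edges, the trichotomy above applies verbatim to each individual crossing and the bookkeeping goes through unchanged. I would therefore present this as a short one-paragraph proof rather than an induction or case analysis on the structure of $G$.
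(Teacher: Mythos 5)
Your proof is correct and matches the paper's (implicit) argument: the paper simply asserts the lemma ``by counting the numbers of crossings,'' which is exactly the partition of crossings into the three classes (both edges in $E_1$, both in $E_2$, one in each) that you spell out, with disjointness of $E_1$ and $E_2$ guaranteeing the trichotomy. Your added remark about pairs of edges crossing more than once is a sensible point of care but changes nothing.
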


The $Kronecker$ $product$ $G\times H$ of graphs $G$ and $H$ has
vertex set $V(G\times H)=V(G)\times V(H)$ and edge set $E(G\times
H)=\{\{(a,x),(b,y)\}:\{a,b\}\in E(G)$ and $\{x,y\}\in E(H)\}$. (The
product is also known as direct product, cardinal product, cross
product and graph conjunction.)

Computing the crossing number of graphs is a complicated yet
classical problem. And it is proved that the problem is NP-complete
by Garey and Johnson \cite{GJ83}.

In literature, the Cartesian product has been paid more attention\cite{A04,Z07,Z08,Z081}, while Kronecker product has fewer results on the crossing number\cite{JD12}.

In this paper, we study the crossing numbers of the Kronecker
product $K_m\times P_n$ and $K_m\times C_n$. In Section 2, we give
an upper bound of $cr(K_{m}\times P_n)$ for $n\geq 4$ and $m\geq 4$.
In Section 3, we give an upper bound of $cr(K_{m}\times C_n)$ for
$n\geq 3$ and $m\geq 4$. In Section 4, we give lower bounds of
$cr(K_{m}\times P_n)$ and $cr(K_{m}\times C_n)$.

\section{Upper bound of $cr(K_{m}\times P_n)$}

\indent \indent Let
$$\begin{array}{llll}
V(K_{m}\times P_n)=&\{(i,j) \ |\ 0\leq i\leq m-1 \mbox{ and }  0\leq j\leq n-1\},\\
E(K_{m}\times P_n)=&\{((i_1,j),(i_2,j+1))\ |\ 0\leq i_1\neq i_2\leq
m-1 \mbox{ and } 0\leq j\leq n-2 \},
\end{array}$$
where the first subscript is modulo $m$.

For $0\leq j\leq n-2$, let
$$\begin{array}{llll}
E^j=&\{((i_1,j),(i_2,j+1))\ |\ 0\leq i_1\neq i_2\leq m-1\}.
\end{array}$$
Then
$$\begin{array}{llll}
\bigcup_{j=0}^{n-2}E^j=E(K_{m}\times P_n),\ \ E^{j_1}\bigcap
E^{j_2}=\emptyset(0\leq j_1\neq j_2\leq n-2).
\end{array}$$

In Figure  \ref{fig: Dp14}, we exhibit drawings $D_{m,4}$ of $K_m\times P_4$ in a cylinder for $m\leq 10$. A cylinder can be `assembled' from a polygon by identifying one pair of opposite sides
of a rectangle\cite{BW78}.

\begin{figure}
\centering
\includegraphics[scale=1.0]{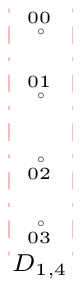}
\includegraphics[scale=1.0]{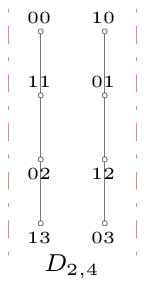}
\includegraphics[scale=1.0]{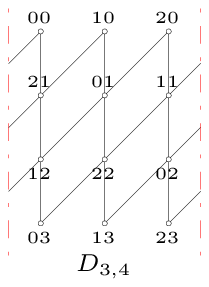}
\includegraphics[scale=1.0]{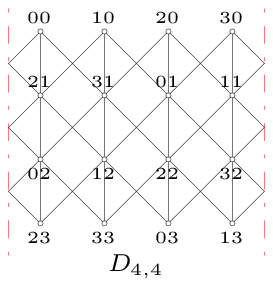}
\includegraphics[scale=1.0]{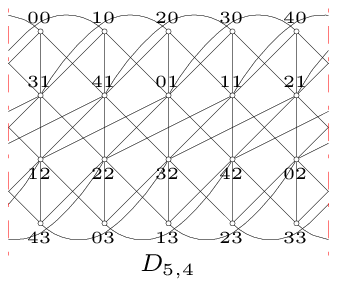}
\includegraphics[scale=1.0]{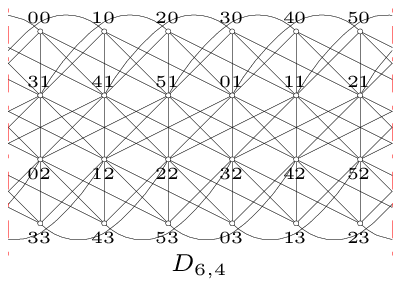}
\includegraphics[scale=1.0]{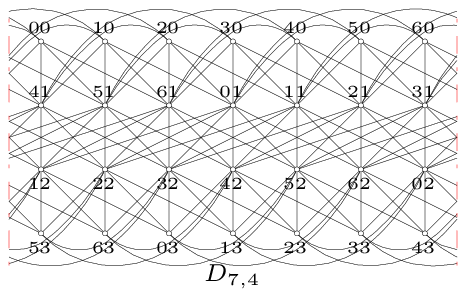}
\hspace{60bp}
\includegraphics[scale=1.0]{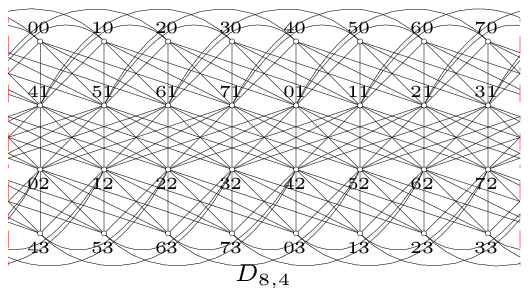}
\includegraphics[scale=1.0]{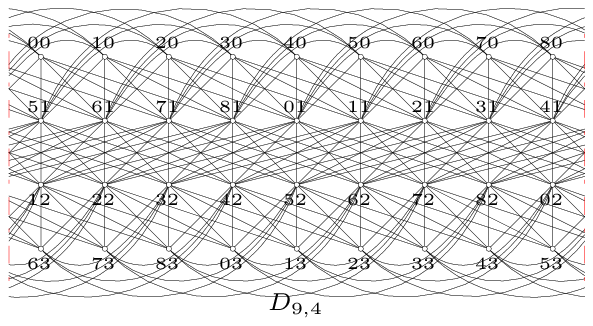}
\hspace{24bp}
\includegraphics[scale=1.0]{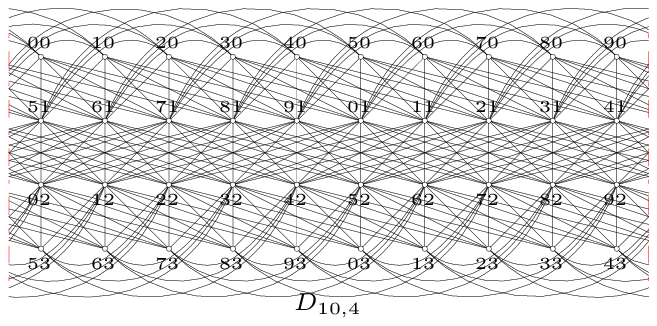}
\caption{\small{Drawings $D_{m,4}$ for $m\leq 10$}}\label{fig: Dp14}
\end{figure}

By counting the numbers of crossings in $D_{m,n}$, we have
\begin{lemma}\label{lemma: Dp4}
For $n\geq 4$,
$$\begin{array}{llll}
\nu(D_{m,n})&=\left \{ \begin{array}{llll}
\frac{(n-1)m(m-1)(m-2)(m-3)}{6}-\frac{m(m-3)(m^2+6m-31)}{24} & \mbox{ for odd }m\geq 5,\\
\frac{(n-1)m(m-1)(m-2)(m-3)}{6}-\frac{m(m-4)(m^2+10m-48)}{24} & \mbox{ for even }m\geq 4.
                \end{array}
     \right .
\end{array}$$
\end{lemma}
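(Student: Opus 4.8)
The plan is to count the crossings of $D_{m,n}$ band by band, exploiting the partition $E(K_m\times P_n)=\bigcup_{j=0}^{n-2}E^j$ introduced above. In the cylinder drawing of Figure~\ref{fig: Dp14}, and in its extension to arbitrary $n$ obtained by repeating the middle band, the $j$-th band $E^j$ is drawn inside the annular region lying between the circles that carry levels $j$ and $j+1$; hence an edge of $E^{j_1}$ and an edge of $E^{j_2}$ cannot cross when $|j_1-j_2|\ge 2$. Applying Lemma~\ref{lemma: E12} repeatedly then gives
\[
\nu(D_{m,n})=\sum_{j=0}^{n-2}\nu_D(E^j)+\sum_{j=0}^{n-3}\nu_D(E^j,E^{j+1}),
\]
so it suffices to evaluate the ``within one band'' contributions $\nu_D(E^j)$ and the ``between consecutive bands'' contributions $\nu_D(E^j,E^{j+1})$.

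Each $E^j$ is a copy of $K_{m,m}$ with a perfect matching deleted. In $D_{m,n}$ the interior bands $E^1,\dots,E^{n-3}$ are all drawn by one fixed template (a rotation of it around the cylinder), so $\nu_D(E^j)$ takes a common value $\alpha_m$ for interior $j$, and likewise $\nu_D(E^j,E^{j+1})$ takes a common value $\beta_m$ when both bands are interior; I would read $\alpha_m$ and $\beta_m$ off the template. The floor functions inherent in placing the $m$ vertices of a level around a circle in the Zarankiewicz-optimal fashion are what force the split between odd and even $m$. The two end bands $E^0$ and $E^{n-2}$ are different: each of them borders only one other band, so near the two ends of the cylinder the template can be ``unfolded'' into a more economical sub-drawing, and this saving --- which is independent of $n$ --- is precisely the source of the negative correction term in the two formulas. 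I would compute $\nu_D(E^0)$, $\nu_D(E^{n-2})$, $\nu_D(E^0,E^1)$ and $\nu_D(E^{n-3},E^{n-2})$ separately.

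Putting these together, $\nu(D_{m,n})$ becomes $(n-3)\alpha_m+(n-4)\beta_m$ plus the $n$-independent sum of the four end quantities, i.e.\ an affine function of $n$; its slope must be $\alpha_m+\beta_m=\frac{m(m-1)(m-2)(m-3)}{6}$ and its constant term is the claimed correction, so collecting terms produces the two displayed expressions. A tidy way to package the linear-in-$n$ part, and to avoid recounting for every $n$, is induction on $n\ge 4$: take $n=4$ as the base case, where one checks the formula against the explicit pictures of Figure~\ref{fig: Dp14} for $m\le 10$ and proves the general count for $m\ge 4$, and in the inductive step verify that inserting one more standard band (which turns the old end band into an interior band and creates a fresh end band) increases the total by exactly $\frac{m(m-1)(m-2)(m-3)}{6}$, uniformly in $n$ and in the parity of $m$.

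The main obstacle is the exact bookkeeping of the local counts --- pinning down $\alpha_m$, $\beta_m$, and especially the end-band quantities together with the correct ``savings'' at the ends --- and carrying through the attendant case analysis on the parity of $m$ (with $m=4$ and $m=5$ appearing as the first even and odd cases). One must also check that the decomposition is genuinely exhaustive: that no two non-adjacent bands cross and that nothing is double-counted, and that the extension of the $n=4$ drawings to general $n$ really does keep every interior band congruent to the template. Once these counts are fixed, what remains is only the elementary algebra of summing an arithmetic progression.
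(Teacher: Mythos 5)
Your overall strategy is the same as the paper's: split the drawing into the bands $E^0,\dots,E^{n-2}$, observe that the interior bands are congruent copies of one template while the two end bands are cheaper, and sum. But as written the proposal has a genuine gap: every quantity that actually carries the content of the lemma is left as ``I would read it off the template'' or ``I would compute it separately.'' In particular, the assertion that the slope of the affine-in-$n$ function ``must be $\alpha_m+\beta_m=\frac{m(m-1)(m-2)(m-3)}{6}$'' is exactly the thing that has to be derived from the drawing, not inferred from the formula you are trying to prove; without an explicit count of the template band you have assumed the conclusion. The paper does this count concretely: from the helical-curve layout it gets $\nu_{D_{m,n}}(E^1)=m\sum_{j=0}^{m-3}\sum_{i=0}^{j}i=\frac{m(m-1)(m-2)(m-3)}{6}$, and it computes the end-band saving as $m\sum_{j=2}^{(m-1)/2}(\sum_{i=2}^{j}i-1)$ for odd $m$ and $m\sum_{j=3}^{m/2}(\sum_{i=3}^{j}i-1)$ for even $m$, which is where the two displayed correction terms come from. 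Those computations are the proof; a plan that defers them does not establish the stated equalities.

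Two smaller points. First, your own premise (each band lies in its own annular region between consecutive circles) already forces $\nu_D(E^{j_1},E^{j_2})=0$ for \emph{all} $j_1\neq j_2$, including adjacent bands; the paper uses exactly this, so your cross terms $\nu_D(E^j,E^{j+1})$ and the quantity $\beta_m$ are identically zero and only complicate the bookkeeping. Second, you treat $E^0$ and $E^{n-2}$ as four separate end quantities to be computed; in the paper's drawing the two ends are congruent, so the total is simply $2\nu_{D_{m,n}}(E^0)+(n-3)\nu_{D_{m,n}}(E^1)$, and no induction on $n$ is needed once one checks (as the paper implicitly does) that the general-$n$ drawing repeats the interior template. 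Your induction packaging would work, but only after the same two explicit counts have been carried out.
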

\begin{proof} Since $\nu_{D_{m,n}}(E^{j_1},E^{j_2})=0$ for $0\leq j_1\neq j_2\leq n-2$, by Lemma \ref{lemma: E12}, we have
$\nu(D_{m,n})=\sum_{j=0}^{n-2}\nu_{D_{m,n}}(E^{j})=2\nu_{D_{m,n}}(E^{0})+(n-3)\nu_{D_{m,n}}(E^{1})$.
For $m\geq 4$,
$$\begin{array}{llll}
\nu_{D_{m,n}}(E^1)=m\sum_{j=0}^{m-3}\sum_{i=0}^{j}i=\frac{m(m-1)(m-2)(m-3)}{6}.\hspace{400bp}
\end{array}$$
For odd  $m\geq 5$,
$$\begin{array}{llll}
\nu_{D_{m,n}}(E^0)=\nu_{D_{m,n}}(E^1)-m\sum_{j=2}^{\frac{m-1}{2}}(\sum_{i=2}^{j}i-1)=\frac{m(m-1)(m-2)(m-3)}{6}-\frac{m(m-3)(m^2+6m-31)}{48}.\hspace{400bp}
\end{array}$$
For even $m\geq 4$,
$$\begin{array}{llll}
\nu_{D_{m,n}}(E^0)=\nu_{D_{m,n}}(E^1)-m\sum_{j=3}^{\frac{m}{2}}(\sum_{i=3}^{j}i-1)=\frac{m(m-1)(m-2)(m-3)}{6}-\frac{m(m-4)(m^2+10m-48)}{48}.\hspace{400bp}
\end{array}$$
Hence,
$$\begin{array}{llll}
\nu(D_{m,n})&=2\nu_{D_{m,n}}(E^{0})+(n-3)\nu_{D_{m,n}}(E^{1})\\
&=\left \{ \begin{array}{llll}
\frac{(n-1)m(m-1)(m-2)(m-3)}{6}-\frac{m(m-3)(m^2+6m-31)}{24} & \mbox{ for odd }m\geq 5,\\
\frac{(n-1)m(m-1)(m-2)(m-3)}{6}-\frac{m(m-4)(m^2+10m-48)}{24} & \mbox{ for even }m\geq 4.
                \end{array}
     \right .
\end{array}$$
\end{proof}
It is easy to verify that $cr(K_m\times P_n)=0$ for $m=1,2,3.$ (See Figure \ref{fig: Dp14}). For $m\geq 4$, by Lemma \ref{lemma: Dp4} we have
\begin{theorem}\label{theorem: kmp4}
For $n\geq 4$,
$$\begin{array}{llll}
cr(K_m\times P_n)&\leq\left \{ \begin{array}{llll}
\frac{(n-1)m(m-1)(m-2)(m-3)}{6}-\frac{m(m-3)(m^2+6m-31)}{24} & \mbox{ for odd }m\geq 5,\\
\frac{(n-1)m(m-1)(m-2)(m-3)}{6}-\frac{m(m-4)(m^2+10m-48)}{24} & \mbox{ for even }m\geq 4.
                \end{array}
     \right .
\end{array}$$

\end{theorem}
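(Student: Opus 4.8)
The statement is an immediate consequence of Lemma~\ref{lemma: Dp4} together with the elementary bound $cr(G)\le\nu(D)$ valid for every drawing $D$ of $G$. So the plan has essentially two ingredients, and the first is already in hand: once we know that $D_{m,n}$ is a genuine drawing of $K_m\times P_n$ and that its number of crossings is the quantity displayed in Lemma~\ref{lemma: Dp4}, we simply conclude $cr(K_m\times P_n)\le\nu(D_{m,n})$ and read off the two cases according to the parity of $m$. Thus the real work lies in (i) specifying the family of drawings $D_{m,n}$ for all $m\ge 4$ and all $n\ge 4$, extending the pictures of Figure~\ref{fig: Dp14}, and (ii) confirming the crossing tally that Lemma~\ref{lemma: Dp4} records.

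For (i), the plan is to realise $K_m\times P_n$ on a cylinder: place the $n$ ``layers'' $\{(0,j),\dots,(m-1,j)\}$, $0\le j\le n-1$, on $n$ parallel circles, with the $i$-th vertex of each layer on a common generator of the cylinder, and draw the edge set $E^j$ (which is a copy of $K_{m,m}$ with the ``identity'' perfect matching deleted) in the annular band between circles $j$ and $j+1$. The key structural feature to arrange is that edges of $E^{j_1}$ and $E^{j_2}$ never cross for $j_1\ne j_2$, i.e. $\nu_{D_{m,n}}(E^{j_1},E^{j_2})=0$; this is exactly what makes $\nu(D_{m,n})=\sum_j\nu_{D_{m,n}}(E^j)$ and reduces the whole count to the count inside a single band. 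Within one band I would route the edge from $(i_1,j)$ to $(i_2,j+1)$ so that it stays in that band, with the number of crossings it creates against the other band edges governed by the ``distance'' of $i_1$ and $i_2$ around the circle; summing over the band produces the nested sum $\sum_{j=0}^{m-3}\sum_{i=0}^{j}i=\tfrac{m(m-1)(m-2)(m-3)}{6m}\cdot m$, matching $\nu_{D_{m,n}}(E^1)$. The two outermost bands $E^0$ and $E^{n-2}$ admit a more economical drawing, since each boundary circle has an empty region on its outer side into which some edges can be rerouted without creating new crossings; this removes the correction terms $m\sum_j(\sum_i i-1)$, and the even/odd split in $m$ enters here because the number of edges that can be so rerouted depends on the parity of $m$.

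For (ii), the arithmetic is routine once the drawing is fixed: verify $\nu_{D_{m,n}}(E^1)=\frac{m(m-1)(m-2)(m-3)}{6}$ by evaluating the double sum, compute the boundary corrections $\frac{m(m-3)(m^2+6m-31)}{48}$ for odd $m$ and $\frac{m(m-4)(m^2+10m-48)}{48}$ for even $m$ from the reduced sums, and then assemble $\nu(D_{m,n})=2\nu_{D_{m,n}}(E^0)+(n-3)\nu_{D_{m,n}}(E^1)$, which collapses to the two claimed closed forms. Since this is precisely the content of Lemma~\ref{lemma: Dp4}, which we may invoke, the proof of the theorem itself is a single line: apply $cr(K_m\times P_n)\le\nu(D_{m,n})$.

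The main obstacle I anticipate is entirely in part (i): giving a clean, parity-uniform description of the optimised drawing of one band (a $K_{m,m}$ minus a perfect matching inside an annulus) and of the boundary-band improvement, in a way that makes both the ``no crossings between different bands'' property and the exact per-band crossing counts transparent rather than merely plausible from the $m\le 10$ figures. Once the drawing is pinned down precisely, everything else is bookkeeping and the theorem is immediate.
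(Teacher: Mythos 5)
Your proposal is correct and matches the paper's route exactly: the theorem is deduced in one line from Lemma~\ref{lemma: Dp4} via $cr(K_m\times P_n)\le\nu(D_{m,n})$, and your sketch of the cylindrical drawings (crossing-free between distinct bands $E^{j_1},E^{j_2}$, the count $m\sum_{j=0}^{m-3}\sum_{i=0}^{j}i$ for an interior band, the parity-dependent savings in the two boundary bands, and the assembly $2\nu_{D_{m,n}}(E^0)+(n-3)\nu_{D_{m,n}}(E^1)$) is precisely the content of that lemma and of Figure~\ref{fig: Dp14}.
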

We will discuss the crossing number of $K_m\times P_n$ for $n=2,3$ in another paper.

\section{Upper bound of $cr(K_{m}\times C_n)$}
\indent \indent It is easy to verify that $cr(K_m\times C_n)=0$ for
$m=1,2.$ (See Figure \ref{fig: D12}). For $m=3$, $K_3\times C_n\cong
C_3\times C_n$. By \cite{JD12}, $cr(K_3\times C_3)=3$, $cr(K_3\times
C_n)\leq 6n-18$ for $3<n<9$ and $cr(K_3\times C_n)\leq 3n$ for
$n\geq 9$. In this paper, we only consider the case for $m\geq 4$.
\begin{figure}[hb]
\centering
\includegraphics[scale=1.0]{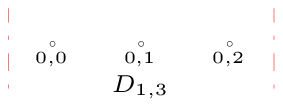}
\includegraphics[scale=1.0]{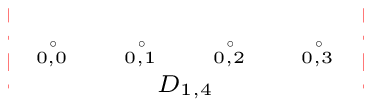}
\includegraphics[scale=1.0]{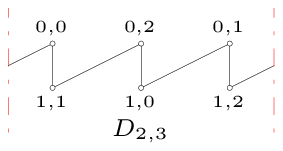}
\includegraphics[scale=1.0]{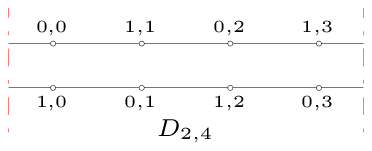}
\caption{\small{Drawings $D_{m,n}$ for $(m,n)\in
\{(1,3),(1,4),(2,3),(2,4)\}$}}\label{fig: D12}
\end{figure}

Let
$$\begin{array}{llll}
V(K_{m}\times C_n)=&\{(i,j) \ |\ 0\leq i\leq m-1, 0\leq j\leq n-1\},\\
E(K_{m}\times C_n)=&\{((i_1,j),(i_2,j+1))\ |\ 0\leq i_1\neq i_2\leq m-1, 0\leq j\leq n-1\},\\
\end{array}$$
where the first subscript is modulo $m$ and the second subscript is
modulo $n$.

For $0\leq j\leq n-1$, let
$$\begin{array}{llll}
V^j=&\{(i,j)\ |\ 0\leq i\leq m-1\},\\
E^j=&\{((i_1,j),(i_2,j+1))\ |\ 0\leq i_1\neq i_2\leq m-1\}.
\end{array}$$
Then
$$\begin{array}{llll}
\bigcup_{j=0}^{n-1}E^j=E(K_{m}\times C_n),\ \ E^{j_1}\bigcap E^{j_2}=\emptyset(0\leq j_1\neq j_2\leq n-1).
\end{array}$$

In Figure  \ref{fig: D44}, we exhibit drawings $D_{m,n}$ of $K_m\times C_n$ in a cylinder for $(m,n)\in \{(4,6),(5,6),$ $(4,7),(5,7),(6,7),(7,7)\}$. In Figure  \ref{fig: D43} and \ref{fig: D65},
we exhibit drawings $D_{m,3}$ for $4\leq m\leq 7$ and $D_{m,5}$ for $6\leq m\leq 11$ respectively.

In drawings $D_{m,n}$, vertices $(i_{0,0},0),(i_{1,0},1),\cdots, (i_{n-1,0},n-1)$ ($(i_{0,m-1},0),(i_{1,m-1},1),\cdots,$ $(i_{n-1,m-1},n-1)$) are placed equidistantly on the perimeter of the top
(bottom) disk, vertices $(i_{0,j},0),(i_{1,j},1),\cdots,(i_{n-1,j},$ $n-1)$ are placed equidistantly on the cylinder from top to down for $j$ from 1 to $m-2$, edges of $E^j$ are drawn on by
shortest helical curves on the cylinder.
\begin{figure}
\centering
\includegraphics[scale=1.0]{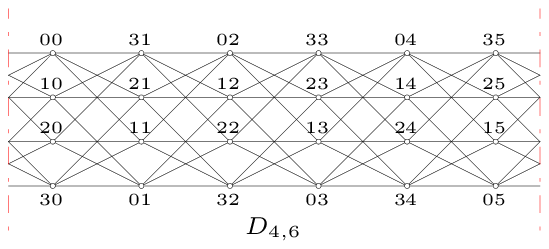}
\includegraphics[scale=1.0]{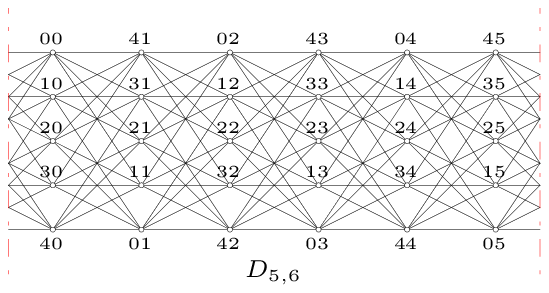}
\includegraphics[scale=1.0]{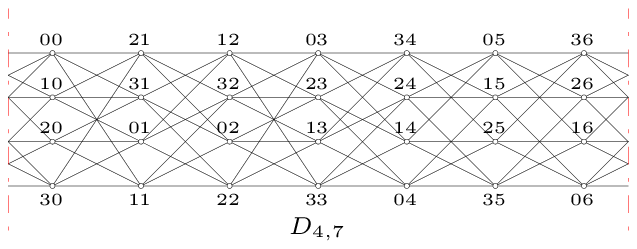}
\includegraphics[scale=1.0]{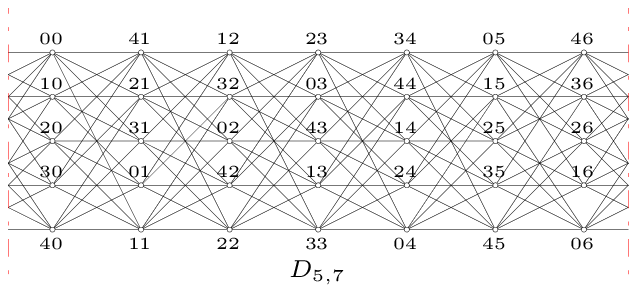}
\includegraphics[scale=1.0]{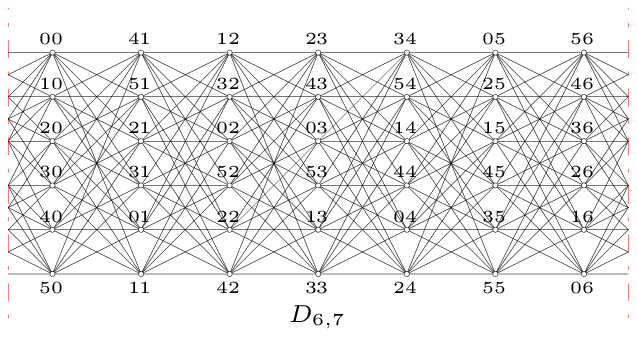}
\includegraphics[scale=1.0]{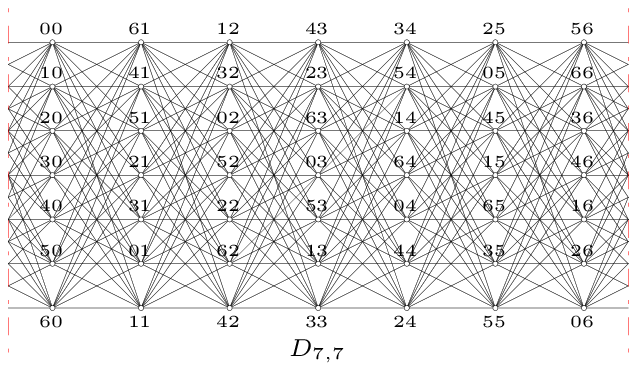}
\caption{\small{Drawings $D_{m,n}$ for $(m,n)\in
\{(4,6),(5,6),(4,7),(5,7),(6,7),(7,7)\}$}}\label{fig: D44}
\end{figure}
For $0\leq j\leq n-1$, let $f^j=(f^j(0),f^{j}(1),\cdots,f^j(m-1))$
be an arrangement of $\{0,1,\cdots,m-1\}$ such that
$i_{j-1,t}=i_{j,f^j(t)}$ for all $0\leq t\leq m-1$, where $j$ is
modulo $m$. In drawings $D_{m,n}$, $i_{0,t}=t$ for $0\leq t\leq
m-1$.

For $m\geq 4$, let
$$\begin{array}{llll}
    f_1(t)=m-1-t,\ 0\leq t\leq m-1,\\
    f_2(0)=m-1, \\
    f_2(t)=m-1-t+(-1)^{t},\ 1\leq t\leq m-2, \\
    f_2(m-1)=\frac{1-(-1)^{m}}{2},\\
    f_3(t)=m-1-t-(-1)^t,\ 0\leq t\leq m-1,\\
\end{array}$$

If $m\geq 4$ and even $n\geq 4$, $f^j=f_1$ for $0\leq j \leq n-1$.\\
\indent If $3\leq$ odd $m \leq$ odd $n$, $f^j=f_2$ for $0\leq j \leq m-1$,  $f^j=f_1$ for $m\leq j \leq n-1$.\\
\indent If $4\leq$ even $m \leq$ odd $n-1$, $f^j=f_{3-j \mbox{ mod
}2} $ for $0\leq j \leq m-1$,  $f^j=f_1$ for $m\leq j \leq n-1$.

For integer $l$, let $inv(f_l)$ be the inversion number in $f_l$. By
counting the number of crossings in $D_{m,n}$, we have
\begin{lemma}\label{lemma: D0}
If $f^j=f_l$, then
$\nu_{D_{m,n}}(E^j)=(^{m}_2)(^{m}_2)-(\sum_{t=0}^{m-1}((m-1-t)f_l(t)+t(m-1-f_l(t))-inv(f_l))$.
\end{lemma}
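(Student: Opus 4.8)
The plan is to reduce $\nu_{D_{m,n}}(E^j)$ to the crossing count of an ordinary straight-line drawing on two parallel lines, and then to evaluate that count by a short inclusion--exclusion over the deleted perfect matching.

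First I would record the geometric fact underlying the drawings $D_{m,n}$: for a fixed $j$, every edge of $E^j$ runs from a vertex of layer $j$ to a vertex of layer $j+1$; the $m$ vertices of layer $j$ sit, at heights $0,1,\dots,m-1$, on one vertical line of the cylinder, the $m$ vertices of layer $j+1$ sit, again at heights $0,1,\dots,m-1$, on the next vertical line, and these two lines bound a sub-band that is only $1/n$ of a full turn. Cutting this band open gives an ordinary planar rectangle $R$ with those $2m$ vertices on its two vertical sides, and by the defining relation of the arrangements $f^j$ the vertex at height $t$ on one side and the vertex at height $f_l(t)$ on the other have the same $K_m$-coordinate (so the edge of the complete bipartite graph joining them is the one absent from $E^j$). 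Since each edge of $E^j$ makes just this one angular step, its ``shortest helical curve'' in $D_{m,n}$ is simply the straight chord joining its two endpoints inside $R$: no edge wraps around the cylinder, and two edges of $E^j$ meet only at common endpoints. Hence the sub-drawing of $E^j$ induced by $D_{m,n}$ is exactly the straight-line drawing in $R$ of $K_{m,m}$, on the two height sets $\{0,\dots,m-1\}$, with the perfect matching $M=\{(t,f_l(t)):0\le t\le m-1\}$ deleted. (Edges of distinct $E^{j'}$ lie in disjoint angular bands of the cylinder, so they do not enter the count.)

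Now I would use the elementary facts about such two-layer straight-line drawings: two chords $(t_1,s_1)$ and $(t_2,s_2)$ with $t_1<t_2$ cross if and only if $s_1>s_2$, and two chords with a common endpoint never cross. Each crossing of the full $K_{m,m}$ drawing is therefore determined by its unordered pair $\{t_1,t_2\}$ of left heights together with its unordered pair $\{s_1,s_2\}$ of right heights, and conversely each such pair of pairs produces exactly one crossing (the chords $(t_1,s_2)$ and $(t_2,s_1)$ with $t_1<t_2$, $s_1<s_2$); so the full drawing has $\binom{m}{2}\binom{m}{2}$ crossings. It remains to subtract the crossings that involve an edge of $M$. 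The matching chord $(t,f_l(t))$ is crossed by exactly the chords from a left height above $t$ to a right height below $f_l(t)$ and the chords from a left height below $t$ to a right height above $f_l(t)$, so by $(m-1-t)f_l(t)+t(m-1-f_l(t))$ chords. Summing this over $t$ counts each matching--nonmatching crossing once and each matching--matching crossing twice; since $(s,f_l(s))$ and $(t,f_l(t))$ cross if and only if $(s-t)(f_l(s)-f_l(t))<0$, there are exactly $inv(f_l)$ matching--matching crossings, so the number of crossings meeting $M$ is $\sum_{t=0}^{m-1}\left((m-1-t)f_l(t)+t(m-1-f_l(t))\right)-inv(f_l)$. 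Subtracting this from $\binom{m}{2}\binom{m}{2}$ gives the asserted value of $\nu_{D_{m,n}}(E^j)$.

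I do not expect a serious obstacle: the actual content is the reduction in the second step -- checking that on the cylinder the ``shortest helix'' between two vertices one layer apart unrolls to a plain straight chord, so that $\nu_{D_{m,n}}(E^j)$ is literally the crossing number of a two-layer drawing of $K_{m,m}$ minus a perfect matching -- after which everything is elementary counting. The one point that still needs care is the double-counting step: one must verify that the configurations over-counted in $\sum_{t=0}^{m-1}\left((m-1-t)f_l(t)+t(m-1-f_l(t))\right)$ are exactly the inverted pairs of $f_l$, which is precisely what makes the correction term equal to $-inv(f_l)$.
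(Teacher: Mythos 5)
Your proof is correct. The paper offers no proof of this lemma at all---it is asserted with the phrase ``by counting the number of crossings in $D_{m,n}$''---and your argument (unrolling the band between consecutive layers into a two-layer straight-line drawing of $K_{m,m}$ minus the matching $\{(t,f_l(t))\}$, counting $\binom{m}{2}\binom{m}{2}$ crossings for the full $K_{m,m}$, and subtracting the crossings meeting the deleted matching, with the $inv(f_l)$ term correcting for the doubly counted matching--matching crossings) is precisely the counting that the stated formula encodes, so it supplies exactly what the paper leaves implicit.
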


By Lemma \ref{lemma: D0}, we can get Lemmas \ref{lemma:
D1}-\ref{lemma: D3}:
\begin{lemma}\label{lemma: D1}
If $f^j=f_1$, then
$\nu_{D_{m,n}}(E^j)=\frac{m(m-1)(m-2)(3m-5)}{12}$.
\end{lemma}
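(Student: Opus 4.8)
The plan is to specialize the general crossing-count formula of Lemma \ref{lemma: D0} to the particular permutation $f_1(t)=m-1-t$. First I would observe that $f_1$ is the full order-reversing permutation of $\{0,1,\dots,m-1\}$, so its inversion number is $inv(f_1)=\binom{m}{2}=\frac{m(m-1)}{2}$, since every pair $t<t'$ is inverted by $f_1$. This disposes of the $-inv(f_l)$ term in Lemma \ref{lemma: D0}.

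Next I would evaluate the sum $\sum_{t=0}^{m-1}\bigl((m-1-t)f_1(t)+t(m-1-f_1(t))\bigr)$. Substituting $f_1(t)=m-1-t$ gives $(m-1-t)^2 + t\cdot t = (m-1-t)^2+t^2$, so the sum is $\sum_{t=0}^{m-1}\bigl((m-1-t)^2+t^2\bigr)=2\sum_{t=0}^{m-1}t^2=2\cdot\frac{(m-1)m(2m-1)}{6}=\frac{m(m-1)(2m-1)}{3}$, using the reindexing $t\mapsto m-1-t$ on the first half and the standard formula for $\sum t^2$. Plugging into Lemma \ref{lemma: D0}:
$$\nu_{D_{m,n}}(E^j)=\binom{m}{2}^2-\frac{m(m-1)(2m-1)}{3}+\frac{m(m-1)}{2}.$$

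Finally I would simplify this expression. Writing $\binom{m}{2}^2=\frac{m^2(m-1)^2}{4}$ and factoring out $\frac{m(m-1)}{12}$, the bracket becomes $\frac{m(m-1)}{12}\bigl(3m(m-1)-4(2m-1)+6\bigr)=\frac{m(m-1)}{12}\bigl(3m^2-11m+10\bigr)=\frac{m(m-1)(m-2)(3m-5)}{12}$, where the last step is the factorization $3m^2-11m+10=(m-2)(3m-5)$. This is exactly the claimed value, so the lemma follows. The only place demanding any care is keeping the sign bookkeeping straight when expanding Lemma \ref{lemma: D0} — the formula there subtracts the whole parenthesized block, which itself ends in $-inv(f_l)$, so that term re-enters with a plus sign; everything else is a routine polynomial identity.
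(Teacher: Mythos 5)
Your proof is correct and follows essentially the same route as the paper: specialize Lemma \ref{lemma: D0} to $f_1$, note $inv(f_1)=\binom{m}{2}$, evaluate $\sum_t((m-1-t)^2+t^2)=\frac{m(m-1)(2m-1)}{3}$, and simplify. Your write-up is in fact more careful than the paper's one-line computation, whose displayed intermediate expression contains a sign typo ($-t(\cdots)$ where Lemma \ref{lemma: D0} requires $+t(\cdots)$) that your version gets right.
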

\begin{proof}
$$\begin{array}{llll}
\nu_{D_{m,n}}(E^j)&=(^{m}_2)(^{m}_2)-(\sum_{t=0}^{m-1}((m-1-t)(m-1-t)-t(m-1-(m-1-t)))-(^{m}_2))\\
&=\frac{m(m-1)(m-2)(3m-5)}{12}.
\end{array}$$
\end{proof}
\begin{lemma}\label{lemma: D21}
If $f^j=f_2$ and $m$ is odd, then
$\nu_{D_{m,n}}(E^j)=\frac{m(m-1)(m-2)(3m-5)}{12}+\frac{m-1}{2}$.
\end{lemma}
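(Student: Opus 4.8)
The plan is to apply Lemma~\ref{lemma: D0} directly with $l=2$ and $m$ odd, reducing the claim to the evaluation of two sums: the inversion number $inv(f_2)$ and the weighted sum $S:=\sum_{t=0}^{m-1}\bigl((m-1-t)f_2(t)+t(m-1-f_2(t))\bigr)$. First I would unwind the definition of $f_2$: it fixes $f_2(0)=m-1$ and $f_2(m-1)=\tfrac{1-(-1)^m}{2}$, which equals $1$ when $m$ is odd, while on the middle range $1\le t\le m-2$ it sends $t\mapsto m-1-t+(-1)^t$, i.e.\ it swaps the two entries in each consecutive pair $(2k-1,2k)$ relative to the reversal $f_1(t)=m-1-t$. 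So $f_2$ is a small perturbation of $f_1$, differing only by transposing adjacent-in-value pairs; this is exactly the structure that makes $inv(f_2)$ computable from $inv(f_1)=\binom{m}{2}$ by subtracting the number of pairs whose order got corrected.

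The key steps, in order: (1) write $f_2$ explicitly as a permutation and identify it as $f_1$ composed with a product of disjoint transpositions of positions/values that are ``one apart'', then count $inv(f_2)$ by tracking how each such local swap changes the inversion count — for odd $m$ I expect $inv(f_2)=\binom{m}{2}-\tfrac{m-1}{2}$ or a similarly clean value; (2) compute $S$ by splitting the sum over $t$ into the endpoint terms $t=0$ and $t=m-1$ and the bulk $1\le t\le m-2$, substituting $f_2(t)=m-1-t+(-1)^t$ in the bulk; the $(m-1-t)f_2(t)+t(m-1-f_2(t))$ expression simplifies because the ``$f_1$ part'' $m-1-t$ contributes the same total as in Lemma~\ref{lemma: D1}'s computation, and the residual $(-1)^t$ corrections telescope or pair up; (3) assemble $\nu_{D_{m,n}}(E^j)=\binom{m}{2}^2 - S + inv(f_2)$ and verify the algebra collapses to $\frac{m(m-1)(m-2)(3m-5)}{12}+\frac{m-1}{2}$. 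A useful sanity check is to compare with Lemma~\ref{lemma: D1}: since $f_2$ is a perturbation of $f_1$, the answer should be $\nu_{D_{m,n}}(E^j)\big|_{f_1}$ plus a correction of order $m$, and indeed the stated formula differs by exactly $\frac{m-1}{2}$.

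The main obstacle I anticipate is the bookkeeping in step (2): the term $(m-1-f_2(t)) = t-(-1)^t$ for $1\le t\le m-2$ means both $(m-1-t)f_2(t)$ and $t(m-1-f_2(t))$ carry $(-1)^t$ cross-terms, and one must be careful that the endpoint values $f_2(0)=m-1$, $f_2(m-1)=1$ do \emph{not} follow the bulk formula (plugging $t=0$ into $m-1-t+(-1)^t$ would give $m$, not $m-1$), so those two indices must be handled separately and their contribution to $S$ computed by hand. A second, milder subtlety is getting the sign and magnitude of the inversion-number change right when a value-pair that was ``out of order'' in $f_1$ becomes ``in order'' in $f_2$ versus when a pair that was in order becomes out of order; parity of $m$ enters precisely here, which is why the odd-$m$ and even-$m$ cases (Lemma~\ref{lemma: D21} versus the even analogue) have different correction terms. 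Once these two finite computations are pinned down, the rest is routine polynomial simplification.
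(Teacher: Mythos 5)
Your proposal follows essentially the same route as the paper: the paper's proof of this lemma is exactly an application of Lemma~\ref{lemma: D0} with $l=2$, evaluating the weighted sum (written there with the $t=0$ endpoint term $(m-1)^2$ separated out and the remaining terms grouped in pairs over $t=1,\dots,\tfrac{m-1}{2}$) together with $inv(f_2)=m-1+2\sum_{t=1}^{(m-3)/2}2t=\tfrac{(m-1)^2}{2}=\binom{m}{2}-\tfrac{m-1}{2}$, and then simplifying. Your intermediate claims check out — in particular your predicted value of $inv(f_2)$, the observation that $t=0$ and $t=m-1$ must be handled outside the bulk formula, and the sanity check that the result exceeds the $f_1$ value of Lemma~\ref{lemma: D1} by exactly $\tfrac{m-1}{2}$ — so the plan is sound.
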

\begin{proof}
$$\begin{array}{llll}
\nu_{D_{m,n}}(E^j)&=(^{m}_2)(^{m}_2)-((m-1)^2+2\sum_{t=1}^{\frac{m-1}{2}}((m-1-2t)(m-2t)+2t(2t-1))\\
&\ \ \ \ -(m-1+2\sum_{t=1}^{\frac{m-3}{2}}2t))\\
&=\frac{m(m-1)(m-2)(3m-5)}{12}+\frac{m-1}{2}.
\end{array}$$
\end{proof}
\begin{lemma}\label{lemma: D22}
If $f^j=f_2$ and $m$ is even, then
$\nu_{D_{m,n}}(E^j)=\frac{m(m-1)(m-2)(3m-5)}{12}+\frac{m-2}{2}$.
\end{lemma}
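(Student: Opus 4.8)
The plan is to mimic exactly the computation carried out in Lemma~\ref{lemma: D21}, adjusting only for the parity change in $m$. By Lemma~\ref{lemma: D0}, once we know the permutation $f_2$ explicitly we only need to evaluate the sum $\sum_{t=0}^{m-1}\bigl((m-1-t)f_2(t)+t(m-1-f_2(t))\bigr)$ and the inversion number $inv(f_2)$; the term $\binom{m}{2}\binom{m}{2}$ is already in closed form. So the first step is to write out $f_2$ for even $m$: we have $f_2(0)=m-1$, $f_2(m-1)=\frac{1-(-1)^m}{2}=0$ (since $m$ is even), and for $1\le t\le m-2$, $f_2(t)=m-1-t+(-1)^t$, so $f_2$ swaps consecutive pairs $\{1,2\},\{3,4\},\dots,\{m-3,m-2\}$ after the order-reversal $t\mapsto m-1-t$. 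The key difference from the odd case is which residue class the ``boundary'' indices $0$ and $m-1$ fall into and how many interior pairs get swapped: for even $m$ there are $\frac{m-2}{2}$ interior transposed pairs rather than $\frac{m-1}{2}$, and the endpoint $f_2(m-1)=0$ behaves differently, which is what will ultimately produce the $\frac{m-2}{2}$ correction instead of $\frac{m-1}{2}$.

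Next I would split the sum $\sum_{t=0}^{m-1}$ into the two endpoint terms ($t=0$ and $t=m-1$) and the interior range $1\le t\le m-2$, and within the interior range group $t$ into consecutive pairs $(2s-1,2s)$ for $1\le s\le\frac{m-2}{2}$. On each such pair, $f_2(2s-1)=m-2s$ and $f_2(2s)=m-1-2s$, so the contribution of the pair to $\sum\bigl((m-1-t)f_2(t)+t(m-1-f_2(t))\bigr)$ can be computed in closed form as a polynomial in $s$, and summed via standard formulas for $\sum s$ and $\sum s^2$. The endpoint terms contribute $(m-1)\cdot 0 + 0\cdot(m-1-0)=0$ from $t=m-1$ and $(m-1)(m-1)+0=(m-1)^2$ from $t=0$; this matches the structure ``$(m-1)^2 + 2\sum_{t=1}^{\dots}(\dots)$'' appearing in Lemma~\ref{lemma: D21}'s proof, with the upper limit of the sum changed to $\frac{m-2}{2}$. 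Separately I would compute $inv(f_2)$: the order-reversal alone contributes $\binom{m}{2}$ inversions, and each interior transposition of an adjacent pair removes exactly one inversion, so $inv(f_2)=\binom{m}{2}-\frac{m-2}{2}$ for even $m$ (as against $\binom{m}{2}-\frac{m-1}{2}$ in the odd case, accounting for the differing count of transposed pairs and the endpoint).

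Assembling the pieces, $\nu_{D_{m,n}}(E^j) = \binom{m}{2}^2 - \bigl(S - inv(f_2)\bigr)$ where $S$ is the sum just computed; substituting and simplifying the polynomial in $m$ should collapse, as in Lemma~\ref{lemma: D1}, to the ``baseline'' value $\frac{m(m-1)(m-2)(3m-5)}{12}$ coming from $f_1$, plus a small residual that I expect to equal $\frac{m-2}{2}$. A useful sanity check, which I would run before trusting the algebra, is the small case $m=4$: here $f_2 = (3,0,1,2)$ (reading off $f_2(0)=3$, $f_2(1)=4-1-1+(-1)=1$? — one must be careful: $f_2(1)=m-1-1+(-1)^1 = 4-1-1-1=1$, wait that gives $1$; recompute: $f_2(1)=2+(-1)=1$... actually $m-1-t = 2$, plus $(-1)^1=-1$, giving $1$; hmm the intended pair-swap reading must be rechecked against the formula), so I would directly count crossings in $D_{4,n}$ from Figure~\ref{fig: D44} to confirm the value $\frac{4\cdot3\cdot2\cdot7}{12}+1 = 14+1 = 15$. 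The main obstacle is purely bookkeeping: getting the parity-dependent boundary behavior of $f_2$ right (in particular the exact value and position of the fixed-ish endpoints and the exact count $\frac{m-2}{2}$ of transposed interior pairs) and carrying the $\sum s^2$ arithmetic without sign errors, so that the messy intermediate polynomial genuinely telescopes down to the clean stated answer.
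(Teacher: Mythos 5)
Your overall strategy is exactly the paper's: plug the explicit $f_2$ into Lemma~\ref{lemma: D0}, split the sum into the two endpoints and the transposed interior pairs, and subtract the inversion count (your $inv(f_2)=\binom{m}{2}-\frac{m-2}{2}$ agrees with the paper's $m-1+2\sum_{t=1}^{(m-2)/2}(2t-1)$). However, two of the intermediate values you write down are wrong, and carrying them through would not yield the stated formula. First, the endpoint $t=m-1$: with $f_2(m-1)=0$ the term is $(m-1-t)f_2(t)+t(m-1-f_2(t))=0\cdot 0+(m-1)(m-1-0)=(m-1)^2$, not $0$; you appear to have substituted $t=0$ into the second summand. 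This is precisely why the paper's even-$m$ computation begins with $2(m-1)^2$ rather than the single $(m-1)^2$ of Lemma~\ref{lemma: D21}, whereas you assert the structure is ``$(m-1)^2+2\sum(\cdots)$'' with only the upper limit changed. Second, your explicit pair values $f_2(2s-1)=m-2s$, $f_2(2s)=m-1-2s$ are the values of the \emph{pure reversal} $f_1$, not of $f_2$: from $f_2(t)=m-1-t+(-1)^t$ one gets $f_2(2s-1)=m-1-2s$ and $f_2(2s)=m-2s$, i.e.\ the swapped pair. With your unswapped values the interior sum collapses to the $f_1$ computation and the claimed correction cannot emerge. (With the correct values each pair contributes $2\bigl((m-1-2s)(m-2s)+2s(2s-1)\bigr)$, matching the paper's summand.)

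To see concretely that the proposal as written fails, take $m=4$: the correct permutation is $f_2=(3,1,2,0)$ (not $(3,0,1,2)$), giving $S=9+4+4+9=26$, $inv(f_2)=5$, and $\nu=36-(26-5)=15$ as claimed; your values give $S=9+5+5+0=19$ and $\nu=36-(19-5)=22$. You do flag your own uncertainty about the pair-swap reading and propose the $m=4$ check, which would indeed expose both slips, but as it stands the computation does not establish the lemma; you need to redo the endpoint and pair evaluations with the correct $f_2$.
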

\begin{proof}
$$\begin{array}{llll}
\nu_{D_{m,n}}(E^j)&=(^{m}_2)(^{m}_2)-(2(m-1)^2+2\sum_{t=1}^{\frac{m-2}{2}}((m-1-2t)(m-2t)+2t(2t-1))\\
&\ \ \ \ -(m-1+2\sum_{t=1}^{\frac{m-2}{2}}(2t-1)))\\
&=\frac{m(m-1)(m-2)(3m-5)}{12}+\frac{m-2}{2}.
\end{array}$$
\end{proof}

\begin{lemma}\label{lemma: D3}
If $f^j=f_3$ and $m$ is even, then
$\nu_{D_{m,n}}(E^j)=\frac{m(m-1)(m-2)(3m-5)}{12}+\frac{m}{2}$.
\end{lemma}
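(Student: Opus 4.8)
The plan is to feed $f_3$ into the formula of Lemma~\ref{lemma: D0} with $l=3$, exactly as Lemmas~\ref{lemma: D1}--\ref{lemma: D22} do for $f_1$ and $f_2$. First I would check that for even $m$ the string $f_3$ really is an arrangement of $\{0,1,\dots,m-1\}$: from $f_3(t)=(m-1-t)-(-1)^t$ we get $f_3(2s)=m-2-2s$ and $f_3(2s+1)=m-2s$ for $0\le s\le \frac{m}{2}-1$, so $f_3$ is obtained from the reversal $f_1(t)=m-1-t$ by interchanging, for each $s$, the two entries sitting in the adjacent positions $2s$ and $2s+1$; in particular it is a permutation.

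Next I would compute the two quantities Lemma~\ref{lemma: D0} needs. For the sum, note that $m-1-f_3(t)=t+(-1)^t$, so
$$(m-1-t)f_3(t)+t\bigl(m-1-f_3(t)\bigr)=(m-1-t)^2+t^2+(-1)^t\bigl(2t-m+1\bigr),$$
whence $\sum_{t=0}^{m-1}\bigl((m-1-t)f_3(t)+t(m-1-f_3(t))\bigr)=2\sum_{t=0}^{m-1}t^2+\sum_{t=0}^{m-1}(-1)^t(2t-m+1)=\frac{m(m-1)(2m-1)}{3}-m$, the alternating part collapsing to $-m$ precisely because $m$ is even (split the sum over even and odd $t$). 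For the inversion number, I would use the standard fact that interchanging the values in an adjacent pair of positions flips the inversion status of that one pair and of no other; since in $f_1$ every one of the $\binom{m}{2}$ pairs is an inversion and the pairs $\{2s,2s+1\}$ cease to be inversions in $f_3$, we get $inv(f_3)=\binom{m}{2}-\frac{m}{2}=\frac{m(m-2)}{2}$.

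Substituting into Lemma~\ref{lemma: D0} gives $\nu_{D_{m,n}}(E^j)=\binom{m}{2}^2-\bigl(\frac{m(m-1)(2m-1)}{3}-m-\frac{m(m-2)}{2}\bigr)$, and a routine simplification (both this expression and $\frac{m(m-1)(m-2)(3m-5)}{12}+\frac{m}{2}$ equal $\frac{m}{12}(3m^3-14m^2+21m-4)$) finishes the proof. The only places needing care are the parity bookkeeping in the alternating sum, where evenness of $m$ is essential, and the clean accounting of $inv(f_3)$; once those are in hand the computation is as short as in the preceding lemmas.
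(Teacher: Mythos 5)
Your proposal is correct and follows essentially the same route as the paper: the paper's one-line proof is precisely the evaluation of Lemma~\ref{lemma: D0} at $f_3$, with the weighted sum grouped into the consecutive pairs $\{2t-1,2t\}$ and the inversion count written as $2\sum_{t=1}^{(m-2)/2}2t=\frac{m(m-2)}{2}$, which agrees with your $\binom{m}{2}-\frac{m}{2}$. The only blemish is the typo $f_3(2s+1)=m-2s$, which should read $m-1-2s$; it is harmless, since the adjacent-swap description and the identity $m-1-f_3(t)=t+(-1)^t$ that your computation actually uses are both correct.
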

\begin{proof}
$$\begin{array}{llll}
\nu_{D_{m,n}}(E^j)&=(^{m}_2)(^{m}_2)-(2\sum_{t=1}^{\frac{m}{2}}((m-2t)(m-(2t-1))+(2t-1)(2t-2))-2\sum_{t=1}^{\frac{m-2}{2}}(2t))\\
&=\frac{m(m-1)(m-2)(3m-5)}{12}+\frac{m}{2}.
\end{array}$$
\end{proof}

By Lemma \ref{lemma: E12} and Lemmas \ref{lemma: D1}-\ref{lemma:
D3}, we have
\begin{lemma}\label{lemma: D4}
For $m\geq 4$ and even $n\geq 4$, $\nu(D_{m,n})=\frac{n\cdot m(m-1)(m-2)(3m-5)}{12}$.
\end{lemma}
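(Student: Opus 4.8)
The plan is to reduce the evaluation of $\nu(D_{m,n})$ to a sum of the per-layer crossing numbers already computed in Lemmas \ref{lemma: D1}--\ref{lemma: D3}. The first step is to record the structural fact, visible from the construction of $D_{m,n}$ on the cylinder, that edges lying in different classes $E^{j_1}$ and $E^{j_2}$ (with $j_1\neq j_2$) are drawn in disjoint regions and therefore never cross; that is, $\nu_{D_{m,n}}(E^{j_1},E^{j_2})=0$ for all $0\leq j_1\neq j_2\leq n-1$. This is exactly the analogue, for $K_m\times C_n$, of the observation used at the start of the proof of Lemma \ref{lemma: Dp4} for $K_m\times P_n$.

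Iterating Lemma \ref{lemma: E12} over the partition $E(K_m\times C_n)=\bigcup_{j=0}^{n-1}E^j$ then gives
\[
\nu(D_{m,n})=\sum_{j=0}^{n-1}\nu_{D_{m,n}}(E^j)+\sum_{0\leq j_1<j_2\leq n-1}\nu_{D_{m,n}}(E^{j_1},E^{j_2})=\sum_{j=0}^{n-1}\nu_{D_{m,n}}(E^j).
\]
Next I would invoke the definition of the drawing: since $m\geq 4$ and $n$ is even, the construction sets $f^j=f_1$ for every $j\in\{0,1,\dots,n-1\}$. Hence by Lemma \ref{lemma: D1} we have $\nu_{D_{m,n}}(E^j)=\frac{m(m-1)(m-2)(3m-5)}{12}$ for each of the $n$ values of $j$, and summing these $n$ equal terms yields $\nu(D_{m,n})=\frac{n\cdot m(m-1)(m-2)(3m-5)}{12}$, as claimed.

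The only part that carries genuine content beyond this bookkeeping is the vanishing of the inter-class crossing numbers $\nu_{D_{m,n}}(E^{j_1},E^{j_2})$, so that is the step I expect to require the most care: one must check from the helical description of the edges that the $n$ bands of the cylinder carrying $E^0,\dots,E^{n-1}$ meet only along the circles $V^0,\dots,V^{n-1}$, which contain vertices but no interior points of edges. Once that is in place, the rest is a direct substitution of Lemma \ref{lemma: D1} into (the iterated form of) Lemma \ref{lemma: E12}.
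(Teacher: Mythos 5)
Your proposal is correct and follows exactly the route the paper intends: the paper states Lemma \ref{lemma: D4} as an immediate consequence of Lemma \ref{lemma: E12} and Lemma \ref{lemma: D1}, relying (as in the proof of Lemma \ref{lemma: Dp4}) on the vanishing of the inter-class crossings $\nu_{D_{m,n}}(E^{j_1},E^{j_2})$ and on the fact that $f^j=f_1$ for every $j$ when $n$ is even. Your only slight imprecision is describing the sets $V^j$ as circles of the cylinder, whereas in these drawings each $V^j$ lies along a generator from the top disk to the bottom disk; this does not affect the argument.
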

\begin{lemma}\label{lemma: D55}
For $4\leq m\leq$ odd $n$, $\nu(D_{m,n})=\frac{n\cdot m(m-1)(m-2)(3m-5)}{12}+\frac{m(m-1)}{2}$.
\end{lemma}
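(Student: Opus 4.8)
The plan is to obtain $\nu(D_{m,n})$ simply by adding up the per-layer crossing numbers supplied by Lemmas \ref{lemma: D1}--\ref{lemma: D3}. First I would note, exactly as at the start of the proof of Lemma \ref{lemma: Dp4} and using the explicit description of $D_{m,n}$ recalled before Lemma \ref{lemma: D0}, that the edges of $E^{j_1}$ and $E^{j_2}$ are drawn in disjoint angular regions of the cylinder whenever $j_1\neq j_2$, so $\nu_{D_{m,n}}(E^{j_1},E^{j_2})=0$; hence, by Lemma \ref{lemma: E12}, $\nu(D_{m,n})=\sum_{j=0}^{n-1}\nu_{D_{m,n}}(E^j)$. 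Writing $A=\frac{m(m-1)(m-2)(3m-5)}{12}$ for the main term common to Lemmas \ref{lemma: D1}--\ref{lemma: D3}, those lemmas give $\nu_{D_{m,n}}(E^j)=A+c(f^j)$, where $c(f_1)=0$, where $c(f_2)$ equals $\frac{m-1}{2}$ for odd $m$ and $\frac{m-2}{2}$ for even $m$, and where $c(f_3)=\frac{m}{2}$. So the whole computation reduces to tallying, from the definition of the $f^j$, how many of the $n$ arrangements $f^0,\dots,f^{n-1}$ are of each type and adding the corresponding corrections.

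I would then split according to the parity of $m$. If $m$ is odd, the hypothesis is $5\le m\le n$ with $n$ odd, and $f^j=f_2$ for $0\le j\le m-1$, $f^j=f_1$ for $m\le j\le n-1$; thus $m$ of the arrangements equal $f_2$ and $n-m$ equal $f_1$, and by Lemmas \ref{lemma: D1}, \ref{lemma: D21},
$$\nu(D_{m,n})=(n-m)A+m\Big(A+\tfrac{m-1}{2}\Big)=nA+\tfrac{m(m-1)}{2}.$$
If $m$ is even, the hypothesis is $4\le m\le n-1$ with $n$ odd, and $f^j=f_{3-(j\bmod 2)}$ for $0\le j\le m-1$, $f^j=f_1$ for $m\le j\le n-1$; among $j=0,\dots,m-1$ the subscript is $3$ for the $m/2$ even values of $j$ and $2$ for the $m/2$ odd values, so there are $m/2$ arrangements equal to $f_3$, $m/2$ equal to $f_2$, and $n-m$ equal to $f_1$, and by Lemmas \ref{lemma: D1}, \ref{lemma: D22}, \ref{lemma: D3},
$$\nu(D_{m,n})=(n-m)A+\tfrac{m}{2}\Big(A+\tfrac{m-2}{2}\Big)+\tfrac{m}{2}\Big(A+\tfrac{m}{2}\Big)=nA+\tfrac{m(m-2)+m^2}{4}=nA+\tfrac{m(m-1)}{2}.$$
Either way, $\nu(D_{m,n})=\frac{n\,m(m-1)(m-2)(3m-5)}{12}+\frac{m(m-1)}{2}$, which is the claim.

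There is no genuine obstacle here: the argument is a bookkeeping of layer types followed by the two one-line simplifications above. The only points needing care are that the disjoint-region argument justifying $\nu_{D_{m,n}}(E^{j_1},E^{j_2})=0$ remains valid for the cyclic drawing (in particular that the helical curves carrying $E^{n-1}$, which return to column $0$, stay clear of $E^0$), and that the parity hypotheses make the layer counts $m$, $n-m$ (for odd $m$) and $m/2$, $m/2$, $n-m$ (for even $m$) non-negative integers summing to $n$ — which they are precisely because $m$ and $n$ are odd in the first case and $m$ is even with $m\le n-1$ in the second.
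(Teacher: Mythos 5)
Your proposal is correct and follows exactly the route the paper intends: the paper gives no written proof of this lemma but derives it ``by Lemma \ref{lemma: E12} and Lemmas \ref{lemma: D1}--\ref{lemma: D3}'', i.e.\ precisely your decomposition $\nu(D_{m,n})=\sum_{j=0}^{n-1}\nu_{D_{m,n}}(E^j)$ followed by tallying the layer types $f_1,f_2,f_3$ from their definitions. Your bookkeeping ($m$ copies of $f_2$ for odd $m$; $m/2$ copies each of $f_2$ and $f_3$ for even $m$) and the resulting correction term $\frac{m(m-1)}{2}$ check out in both parity cases.
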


Now we consider the case of $m>$ odd $n\geq 3$. Let $r=\lfloor\frac{m}{n}\rfloor$, $s=m$ mod $n$, $s_0=\frac{n-1}{2}$ and $s_1=\frac{s}{2}$. Let {\tiny
$$\begin{array}{llll}
        f_4(d\cdot s_0)=m-d\cdot s_0-2+\frac{1-(-1)^{s_0}}{2},\ 0\leq d\leq r-1,\hspace{400bp}\\
        f_4(t+d\cdot s_0)=m-d\cdot s_0-t-1-(-1)^{t+s_0},\ n\geq 5, \ 0\leq t\leq s_0-1,\ 0\leq d\leq r-1,\\

        f_4(s_0\cdot r)=s_0\cdot r+s+r-2+\frac{1-(-1)^{s_1}}{2}, \ s\geq 2,\\
        f_4(t)=m-t-1-(-1)^{t+s_1+s_0*r},\ s\geq 4, \ s_0\cdot r+1\leq t\leq s_0\cdot r+s_1-1, \\
        f_4(s_0\cdot r+s_1+d)=(d+1)s_0-1,\ 0\leq d\leq r-1,\\
        f_4(t)=m-t-1-(-1)^{t+s_1+r+s_0*r},\ \mbox{ even }s\geq 4, \ s_0\cdot r+s_1+r\leq t\leq s_0\cdot r+s+r-2, \\
        f_4(s_0\cdot r+s+r-1)=s_0\cdot r+1-\frac{1-(-1)^{s_1}}{2},\ \mbox{ even }s\geq 2, \\
        f_4(s_0\cdot r+r)=s_0\cdot r+r,\ s=1, \\
        f_4(s_0\cdot r+s_1+r)=s_0\cdot r+s_1-1,\ \mbox{ odd } s\geq 3, \\
        f_4(s_0\cdot r+s_1+r+1)=s_0\cdot r+s_1+r,\ \mbox{ odd } s\geq 3, \\
        f_4(t)=m-t-1-(-1)^{t+s_1+r+s_0*r},\ \mbox{ odd }s\geq 7, \ s_0\cdot r+s_1+r+2\leq t\leq s_0\cdot r+s+r-2, \\
        f_4(s_0\cdot r+s+r-1)=s_0\cdot r+s_1+\frac{1-(-1)^{s_1}}{2},\ \mbox{ odd } s\geq 5, \\
        f_4(m-(d+1)s_0)=s_0\cdot r+s_1+d,\ 0\leq d\leq r-1, \\
        f_4(t)=m-t-1+(-1)^{t+r+s_0+s_0\cdot d},\ n\geq 5,  \ m-d\cdot s_0-s_0+1\leq t\leq m-d\cdot s_0-2,\ 0\leq d\leq r-1, \\
        f_4(m-d\cdot s_0-1)=d\cdot s_0+\frac{1-(-1)^{s_0}}{2},\ n\geq 5, \ 0\leq d\leq r-1.
\end{array}$$
$$\begin{array}{llll}
        f_5(t)=f_4(t),\ 0\leq t\leq s_0\cdot r-1 \mbox{ or } m-s_0\cdot r\leq t\leq m-1,\hspace{400bp}\\
        f_5(s_0\cdot r)=s_0\cdot r+s+r-1-\frac{1-(-1)^{s_1}}{2},\ s\geq 4, \\
        f_5(t)=m-t-1+(-1)^{s_0\cdot r+s_1+t},\ s\geq 6, \ s_0\cdot r+1\leq t\leq s_0\cdot r+s_1-2,\\
        f_5(s_0\cdot r+s_1-1)=s_0\cdot r+s_1-1,\ s\geq 2, \\
        f_5(s_0\cdot r+s_1+d)=(d+1)s_0-1,\ 0\leq d\leq r-1,\\
        f_5(s_0\cdot r+s_1+r)=s_0\cdot r+s_1+r,\ s\geq 2\\
        f_5(t)=m-t-1-(-1)^{t+s_1+r+s_0\cdot r},\ s\geq 6,\ s_0\cdot r+s_1+r+1\leq t\leq s_0\cdot r+s+r-2,\\
        f_5(s_0\cdot r+s+r-1)=s_0\cdot r+\frac{1-(-1)^{s_1}}{2},\ s\geq 4.
\end{array}$$
$$\begin{array}{llll}
        f_6(t)=f_4(t),\ 0\leq t\leq s_0\cdot r-1 \mbox{ or } m-s_0\cdot r\leq t\leq m-1,\hspace{400bp}\\
        f_6(s_0\cdot r)=s_0\cdot r+s+r-1, \ s\geq 1,\\
        f_6(t)=m-t-1,\ s\geq 3, \ s_0\cdot r+1\leq t\leq s_0\cdot r+s_1-1+\frac{1-(-1)^{s}}{2}\\
        f_6(s_0\cdot r+s_1+d+\frac{1-(-1)^{s}}{2})=(d+1)s_0-1,\ 0\leq d\leq r-1, \\
        f_6(t)=m-t-1,\ s\geq 2, \ s_0\cdot r+s_1+r+\frac{1-(-1)^{s}}{2}\leq t\leq s_0\cdot r+s+r-1.
\end{array}$$
$$\begin{array}{llll}
        f_7(t)=f_4(t),\ 0\leq t\leq s_0\cdot r-1 \mbox{ or } m-s_0\cdot r\leq t\leq m-1,\hspace{400bp}\\
        f_7(s_0\cdot r)=s_0\cdot r+s+r-1,\ s\geq 3,\\
        f_7(t)=m-t-1,\ s\geq 5, \ s_0\cdot r+1\leq t\leq s_0\cdot r+s_1-1,\\
        f_7(s_0\cdot r+d+s_1)=(d+1)s_0-1,\ 0\leq d\leq r-1, \\
        f_7(t)=m-t-1, \ s_0\cdot r+s_1+r\leq t\leq s_0\cdot r+s+r-1,\\
        f_7(m-(d+1)s_0)=s_0\cdot r+s_1+d+1,\ 0\leq d\leq r-1, \\
        f_7(t)=m-t-1+(-1)^{t+r+s_0+s_0\cdot d},\ m-d\cdot s_0-s_0+1\leq t\leq m-d\cdot s_0-2,\ 0\leq d\leq r-1, \\
        f_7(m-d\cdot s_0-1)=d\cdot s_0+\frac{1-(-1)^{s_0}}{2},\ 0\leq d\leq r-1.

\end{array}$$
$$\begin{array}{llll}
        f_8(t)=f_4(t),\ 0\leq t\leq s_0\cdot r-1,\hspace{400bp}\\
        f_8(s_0\cdot r)=s_0\cdot r+s+r-1, s\geq 1,\\
        f_8(t)=m-t-1,\ s\geq 3,\ s_0\cdot r+1\leq t\leq s_0\cdot r+s_1-1,\\
        f_8(s_0\cdot r+d+s_1)=(d+1)s_0-1,\ 0\leq d\leq r-1, \\
        f_8(t)=m-t-1,\ s\geq 2, \ s_0\cdot r+s_1+r\leq t\leq s_0\cdot r+s+r-1,\\
        f_8(m-(d+1)s_0)=s_0\cdot r+s_1+d+1,\ 0\leq d\leq r-1, \\
        f_8(t)=m-t-1+(-1)^{t+r+s_0+s_0\cdot d},\ m-d\cdot s_0-s_0+1\leq t\leq m-d\cdot s_0-2,\ 0\leq d\leq r-1, \\
        f_8(m-d\cdot s_0-1)=d\cdot s_0+\frac{1-(-1)^{s_0}}{2},\ 0\leq d\leq r-1.
\end{array}$$
}
\begin{figure}[h]
\centering
\includegraphics[scale=1.0]{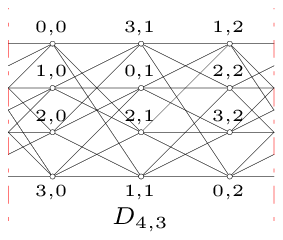}
\includegraphics[scale=1.0]{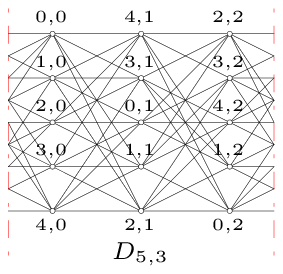}
\includegraphics[scale=1.0]{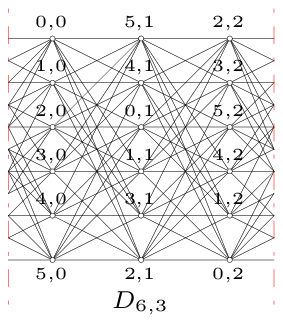}
\includegraphics[scale=1.0]{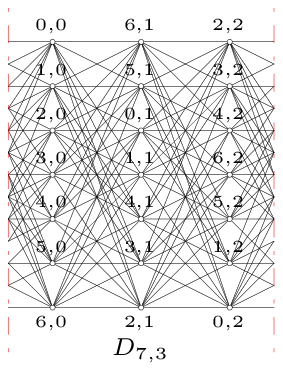}
\caption{\small{Drawings $D_{m,3}$ for $4\leq m\leq 7$}}\label{fig:
D43}
\end{figure}

If $s$ is even, $f^j=f_{5-j \mbox{ mod }2}$ for $0\leq j \leq s-1$,  $f^j=f_6$ for $s\leq j \leq n-1$.\\
\indent If $s$ is odd, $f^j=f_4$ for $0\leq j \leq s-1$,  $f^j=f_{8-j \mbox{ mod }2}$ for $s\leq j \leq n-1$.

For $0\leq i<j \leq m-1$, let
$$\begin{array}{llll}
inv_{l,i,j}&=\left \{ \begin{array}{llll}
1 & \mbox{ if } f_l(i)>f_l(j)\\
0 & \mbox{ if } f_l(i)<f_l(j).
                \end{array}
     \right .
\end{array}$$
Then $inv(f_l)=\sum_{i=0}^{m-1}\sum_{j=i+1}^{m-1}(inv_{l,i,j})$.
\begin{figure}
\centering
\includegraphics[scale=1.0]{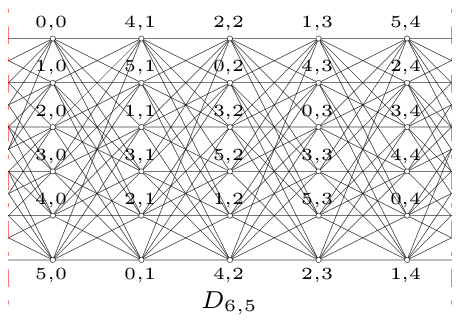}
\includegraphics[scale=1.0]{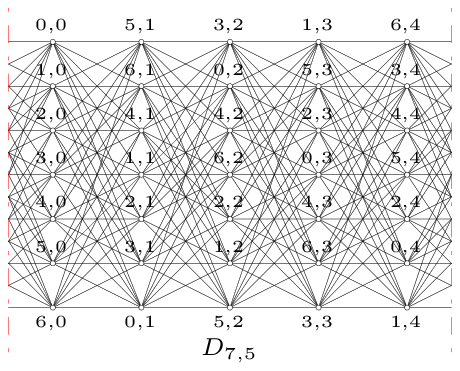}
\includegraphics[scale=1.0]{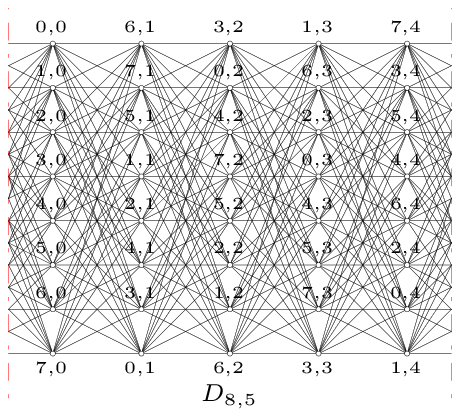}
\includegraphics[scale=1.0]{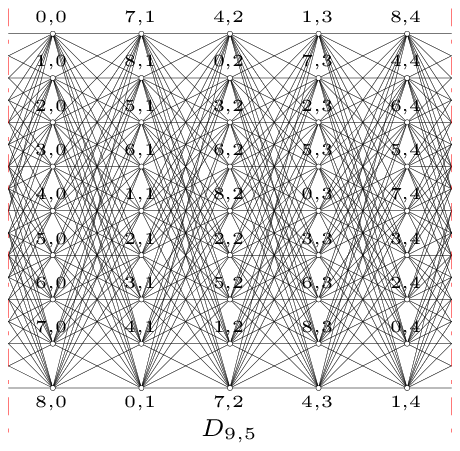}
\includegraphics[scale=1.0]{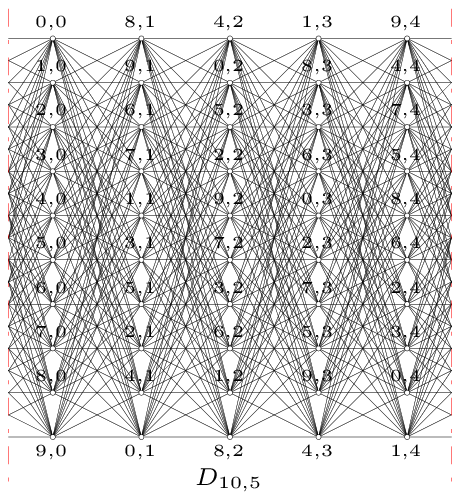}
\includegraphics[scale=1.0]{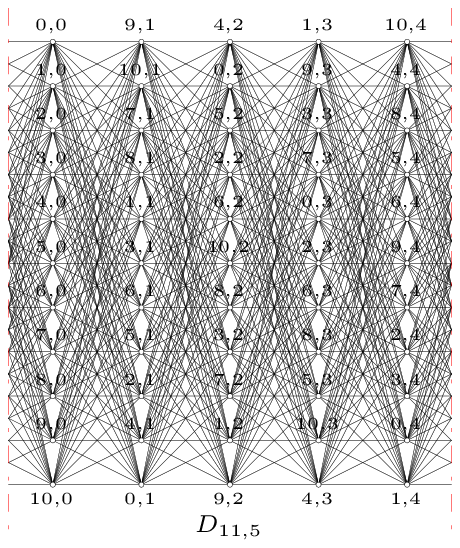}
\caption{\small{Drawings $D_{m,5}$ for $6\leq m\leq 11$}}\label{fig:
D65}
\end{figure}

For $l=4,5,6,7,8$, let
$$\begin{array}{llll}
S_1=&\{i\ |\ 0\leq i\leq s_0\cdot r-1 \}\bigcup_{d=1}^{r}\{i\ |\ m+1-d\cdot s_0\leq m-1-(d-1)d_0\}.
\end{array}$$
For $l=4,5,6,7$, let
$$\begin{array}{llll}
S_2=&\{i\ |\ s_0\cdot r+s_1\leq i\leq s_0\cdot r+s_1+r-1 \}\bigcup_{d=1}^{r}\{m-1-d\cdot s_0+1\},\\
S_3=&\{i\ |\ s_0\cdot r\leq i\leq s_0\cdot r+s_1-1 \}\bigcup \{i\ |\ s_0\cdot r+s_1+r\leq i\leq s_0\cdot r+s\}.
\end{array}$$
for $l=8$, let
$$\begin{array}{llll}
S_2=&\{i\ |\ s_0\cdot r+s_1+1\leq i\leq s_0\cdot r+s_1+r \}\bigcup_{d=1}^{r}\{m-1-d\cdot s_0+1\},\\
S_3=&\{i\ |\ s_0\cdot r\leq i\leq s_0\cdot r+s_1 \}\bigcup \{i\ |\ s_0\cdot r+s_1+r+1\leq i\leq s_0\cdot r+s\}.
\end{array}$$
For $l=4,5,6,7,8$ and $k=1,2,3$, let
$$\begin{array}{llll}
F_{l,k}=\sum_{t\in S_k}(m-1-t)f_l(t)+t(m-1-f_l(t))-\sum_{t\in
S_k}\sum_{j=t+1}^{m-1}(inv_{l,t,j}).
\end{array}$$
By Lemma \ref{lemma: D0}, we have
\begin{lemma}\label{lemma: D2}
If $f^j=f_l$, then
$\nu_{D_{m,n}}(E^j)=(^{m}_2)(^{m}_2)-(F_{l,1}+F_{l,2}+F_{l,3})$.
\end{lemma}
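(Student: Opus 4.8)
The plan is to deduce the lemma directly from Lemma~\ref{lemma: D0} by regrouping the summation index, so that the only substantive work is a partition check. Lemma~\ref{lemma: D0} writes $\nu_{D_{m,n}}(E^j)$ as $\binom{m}{2}\binom{m}{2}$ minus the quantity $Q_l:=\sum_{t=0}^{m-1}\bigl((m-1-t)f_l(t)+t(m-1-f_l(t))\bigr)-inv(f_l)$, and by definition $inv(f_l)=\sum_{i=0}^{m-1}\sum_{j=i+1}^{m-1}inv_{l,i,j}$. Comparing with the definition of $F_{l,k}$, one sees that $F_{l,1}+F_{l,2}+F_{l,3}$ equals $Q_l$ precisely when, in each of the two sums defining $Q_l$, the index $t$ runs over all of $\{0,1,\dots,m-1\}$ as $t$ ranges over $S_1$, then $S_2$, then $S_3$; that is, precisely when $\{S_1,S_2,S_3\}$ is a partition of $\{0,1,\dots,m-1\}$. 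So the first step is to establish this partition property, for each $l\in\{4,5,6,7,8\}$, for each parity of $s$, and for the shifted ranges used when $l=8$.

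Granting the partition, the remainder is immediate: applying $\sum_{t=0}^{m-1}g(t)=\sum_{k=1}^{3}\sum_{t\in S_k}g(t)$ once with $g(t)=(m-1-t)f_l(t)+t(m-1-f_l(t))$ and once with $g(t)=\sum_{j=t+1}^{m-1}inv_{l,t,j}$ gives
\[
Q_l=\sum_{k=1}^{3}\left(\sum_{t\in S_k}\bigl((m-1-t)f_l(t)+t(m-1-f_l(t))\bigr)-\sum_{t\in S_k}\sum_{j=t+1}^{m-1}inv_{l,t,j}\right)=F_{l,1}+F_{l,2}+F_{l,3},
\]
and substituting into Lemma~\ref{lemma: D0} yields $\nu_{D_{m,n}}(E^j)=\binom{m}{2}\binom{m}{2}-(F_{l,1}+F_{l,2}+F_{l,3})$, which is the claim.

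The main obstacle is therefore the bookkeeping behind the partition claim, and I would carry it out as follows. Each $S_k$ is presented as an explicit finite union of integer intervals and singletons parameterised by $r=\lfloor m/n\rfloor$, $s=m\bmod n$, $s_0=(n-1)/2$, $s_1=\lfloor s/2\rfloor$, and these three families of index ranges roughly correspond to the regions on which $f_l$ is defined by the separate blocks of its definition (the "outer" region where $f_l$ agrees with $f_4$, and the "inner" region where it does not). Using $m=rn+s=2rs_0+r+s$ together with the parity relation $s=2s_1$ or $s=2s_1+1$, I would check that the low interval of $S_1$ covers $[0,s_0r-1]$; that the blocks forming the rest of $S_1$, interleaved with the singletons $\{m-ds_0:1\le d\le r\}$ supplied by $S_2$, exactly cover the top segment $[m-s_0r,\,m-1]$; and that the remaining interval of $S_2$ together with the two intervals of $S_3$ abut consecutively and cover the central segment, with no gap and no overlap. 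This requires separating the even-$s$ and odd-$s$ subcases and, for $l=8$, tracking the one-step shift in the ranges defining $S_2$ and $S_3$. Nothing here is deep, but the shifted index ranges in the definitions of $f_5,\dots,f_8$ and of $S_2,S_3$ make off-by-one slips easy, so the care lies entirely in confirming that successive intervals meet without overlapping and that every index of $\{0,1,\dots,m-1\}$ is counted exactly once.
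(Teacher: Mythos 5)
Your proposal is correct and coincides with the paper's argument: the paper offers no written proof, simply asserting the lemma ``by Lemma~\ref{lemma: D0}'', which is exactly your regrouping of the two sums in Lemma~\ref{lemma: D0} over the partition $\{S_1,S_2,S_3\}$ of $\{0,1,\dots,m-1\}$. You are merely more explicit than the paper in flagging that the partition property itself must be checked case by case.
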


By the definition of $f_l$ and $F_{l,k}$, we have Lemmas \ref{lemma:
D41}-\ref{lemma: D433}:

\begin{lemma}\label{lemma: D41}
For $l=4,5,6,7,8$, {\tiny $$ F_{l,1}=\frac{4(2s_0^{3}-s_0^{2})r^{3}-3((4m+2)s_0^{2}-2m\cdot s_0+1)r^{2}+(-2s_0^{2}+(12m^{2}-12m+10)s_0-6m^{2}-3)r}{6}.$$}
\end{lemma}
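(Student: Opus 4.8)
The plan is to evaluate $F_{l,1}$ directly from its definition, and the first point is that the quantity is in fact independent of $l$. One checks that
$$S_1=\Bigl(\bigcup_{d=0}^{r-1}[d s_0,(d+1)s_0-1]\Bigr)\cup\Bigl(\bigcup_{d=0}^{r-1}[m-(d+1)s_0+1,\,m-d s_0-1]\Bigr),$$
so that $S_1$ is the disjoint union of $r$ ``left'' blocks of length $s_0$ and $r$ ``right'' blocks of length $s_0-1$, and that $f_l|_{S_1}=f_4|_{S_1}$ for every $l\in\{4,5,6,7,8\}$: for $l=5,6,7$ this is asserted outright on the larger set $\{t:0\le t\le s_0 r-1\}\cup\{t:m-s_0 r\le t\le m-1\}\supseteq S_1$, and for $l=8$ it follows by comparing the two defining lists on each of the two pieces of $S_1$. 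Hence it suffices to compute $F_{4,1}$. Writing $F_{4,1}=\Sigma_1-\Sigma_2$ with $\Sigma_1=\sum_{t\in S_1}\bigl((m-1-t)f_4(t)+t(m-1-f_4(t))\bigr)$ and $\Sigma_2=\sum_{t\in S_1}\sum_{j=t+1}^{m-1}inv_{4,t,j}$, we treat the two sums separately.

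For $\Sigma_1$, on each left block $L_d=[d s_0,(d+1)s_0-1]$ and each right block $R_d=[m-(d+1)s_0+1,\,m-d s_0-1]$ the formula for $f_4$ shows that $f_4(t)$ equals an explicit affine (reversal-type) function of $t$ corrected by a term of the form $\pm1$ or $\tfrac{1-(-1)^{s_0}}{2}$ whose sign depends on the parities of $t$, $s_0$, $r$ and $d$, the two endpoints $f_4(d s_0)$ and $f_4(m-d s_0-1)$ being given by their own formulas. Substituting, $(m-1-t)f_4(t)+t(m-1-f_4(t))$ becomes a quadratic polynomial in $t$ plus a parity correction; splitting each block sum by the parity of $t$ and applying the standard formulas for $\sum t$ and $\sum t^2$, then summing over the $2r$ blocks, reduces $\Sigma_1$ to a polynomial in $m$, $s_0$, $r$.

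For $\Sigma_2$, I would read it as the number of pairs $(t,j)$ with $t\in S_1$, $t<j\le m-1$ and $f_4(t)>f_4(j)$, classified by where $j$ sits: (i) in the same block as $t$, (ii) in another block of $S_1$, or (iii) in the ``middle'' block $M=[s_0 r,\,m-s_0 r-1]$. Since $f_4$ on each block is a reversal up to a bounded correction, case (i) contributes $\binom{s_0}{2}$ (respectively $\binom{s_0-1}{2}$) per block plus a small parity correction. For (ii) and (iii) one checks from the value ranges of $f_4$ that the images of distinct blocks, and the image of $M$, are essentially nested intervals, so the indicator ``$f_4(t)>f_4(j)$'' is constant on each ordered pair of blocks; the resulting count is a sum of products of block sizes, again polynomial in $m$, $s_0$, $r$. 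It is precisely this step --- that case (iii) does not feel the internal structure of $M$ --- that confirms $F_{l,1}$ is the same for all $l$. Combining $\Sigma_1-\Sigma_2$ and collecting terms yields the asserted closed form.

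The main obstacle is purely bookkeeping: every block sum must be split according to the parity of $t$, and the final answer depends on the parities of $s_0$ and $r$, so the endpoint corrections $\tfrac{1-(-1)^{s_0}}{2}$ and the signs $(-1)^{t+s_0}$, $(-1)^{t+r+s_0+s_0 d}$ have to be carried consistently through both $\Sigma_1$ and $\Sigma_2$; one also has to dispose of the degenerate case $n=3$ (i.e. $s_0=1$, where several clauses of the $f_4$ definition are vacuous and the right blocks of $S_1$ collapse, leaving $S_1=[0,r-1]$) separately. Once the block structure and the monotonicity of $f_4$ across blocks are pinned down, the remainder is a routine, if lengthy, summation.
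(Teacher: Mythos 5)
Your plan is correct and follows essentially the same route as the paper: both evaluate $F_{l,1}$ directly from the definition by decomposing $S_1$ into the $2r$ blocks of lengths $s_0$ and $s_0-1$, splitting the block sums according to the parity of $s_0$ (and of $t$), and subtracting the inversion count, the independence of $l$ being implicit in the paper since $f_l$ agrees with $f_4$ on $S_1$. One caution for the bookkeeping stage: the paper's own computation ends with the coefficient $-2(m+2)s_0$ inside the $r^2$ term while the lemma statement reads $-2m\cdot s_0$, so you will need to determine which of the two is actually produced by the summation.
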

\begin{proof} For even $s_0$, we have {\tiny
$$\begin{array}{rlll}
F_{l,1}=&2\sum_{t=1}^{\frac{s_0\cdot r}{2}}((m-2t)(m-2t+1)+(2t-1)(2t-2))\\
 &+\sum_{d=1}^{r}(2\sum_{t=1}^{\frac{s_0-2}{2}}((s_0\cdot d-2t)(s_0\cdot d-2t-1)+(m-s_0\cdot d+2t-1)(m-s_0\cdot d+2t)))\\
 &+\sum_{d=1}^{r}((m-1+s_0-s_0\cdot d)^{2}+(s_0(d-1))^{2})\\
 &-(2\sum_{t=1}^{\frac{s_0\cdot r}{2}}(m-2t)+\sum_{d=1}^{r}(2\sum_{t=1}^{\frac{s_0-2}{2}}(2t-1+(s_0-1)(d-1))+\sum_{d=1}^{r}(s_0-1)(d-1)))\\
=&\frac{4(2s_0^{3}-s_0^{2})r^{3}-3((4m+2)s_0^{2}-2(m+2)s_0+1)r^{2}+(-2s_0^{2}+(12m^{2}-12m+10)s_0-6m^{2}-3)r}{6}.
\end{array}$$
} For odd $s_0$, we have {\tiny
$$\begin{array}{rlll}
F_{l,1}=&\sum_{d=1}^{r}((m-1+s_0-s_0\cdot d)^{2}+(s_0(d-1))^{2})\\
 &+\sum_{d=1}^{r}(2\sum_{t=1}^{\frac{s_0-1}{2}}((m-s_0\cdot d+2t-1)(m-s_0\cdot d+2t-2)+(s_0\cdot d-2t)(s_0\cdot d-2t+1)))\\
 &+\sum_{d=1}^{r}(2\sum_{t=1}^{\frac{s_0-1}{2}}((s_0\cdot d+2t-1)(s_0\cdot d+2t-2)+(m-s_0\cdot d-2t)(m-s_0\cdot d-2t+1)))\\
 &-(\sum_{d=1}^{r}(m-1+s_0-s_0\cdot d)+\sum_{d=1}^{r}(2\sum_{t=1}^{\frac{s_0-1}{2}}(m-1+s_0-s_0\cdot d-2t))+\sum_{d=1}^{r}(2\sum_{t=1}^{\frac{s_0-1}{2}}((s_0-1)(d-1)-2+2t)))\\
=&\frac{4(2s_0^{3}-s_0^{2})r^{3}-3((4m+2)s_0^{2}-2(m+2)s_0+1)r^{2}+(-2s_0^{2}+(12m^{2}-12m+10)s_0-6m^{2}-3)r}{6}.
\end{array}$$
}
\end{proof}
\begin{lemma}\label{lemma: D42}
$$\begin{array}{llll}
F_{l,2}&=\left \{ \begin{array}{llll}
\frac{-(2s_0-1)r^{2}+(-2s_0+2m^{2}-4m+5)r}{2}& \mbox{ if } l=4,5,6\\
\frac{r^{2}+(2m^{2}-6m+3)r}{2}& \mbox{ if } l=7,\\
\frac{(-4s_0+1)r^{2}+(-4s_0+2m^{2}-2m+7)r}{2}& \mbox{ if } l=8.
                \end{array}
     \right .
\end{array}$$
\end{lemma}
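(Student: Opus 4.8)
The plan is to evaluate $F_{l,2}=\sum_{t\in S_2}\bigl[(m-1-t)f_l(t)+t(m-1-f_l(t))\bigr]-\sum_{t\in S_2}\sum_{j=t+1}^{m-1}inv_{l,t,j}$ directly from the explicit descriptions of $S_2$ and of $f_l$, treating the three cases $l\in\{4,5,6\}$, $l=7$ and $l=8$ in parallel since they differ only by a few bounded-size corrections. First I would enumerate $S_2$: in each case it is the disjoint union of a ``middle'' block, the indices $s_0 r+s_1+d$ with $0\le d\le r-1$ (shifted by one when $l=8$), on which the defining clauses give $f_l(t)=(d+1)s_0-1$ apart from a single boundary value for $l=8$; and a ``high'' block, the indices $m-d\cdot s_0$ with $1\le d\le r$, on which the clauses give $f_l(m-d\cdot s_0)=s_0 r+s_1+d-1$ for $l\in\{4,5,6\}$ and $s_0 r+s_1+d$ for $l\in\{7,8\}$. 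On each block both $t$ and $f_l(t)$ are affine in the running index $d$, so $(m-1-t)f_l(t)+t(m-1-f_l(t))$ is a quadratic polynomial in $d$, and the first sum in $F_{l,2}$ is computed by the standard closed forms for $\sum d$, $\sum d^2$, $\sum d^3$.

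The substantive step is the second sum, $\sum_{t\in S_2}\#\{\,j:t<j\le m-1,\ f_l(j)<f_l(t)\,\}$, which I would handle using the global shape of $f_l$ (a near-reversal organised into the prescribed blocks). On the middle block the values $f_l(t)$ increase with $d$; the high block lies entirely to the right of the middle block, all of its $f_l$-values exceed all middle-block values, and $f_l$ restricted to it is order-reversing. This fixes the number of inversions with both endpoints in $S_2$ and shows there are none joining the two blocks. For a middle-block index $t$ I would use $\#\{j>t:f_l(j)<f_l(t)\}=f_l(t)-\#\{i<t:f_l(i)<f_l(t)\}$ and observe, from the defining clauses, that the only indices $i<t$ carrying an $f_l$-value below $f_l(t)$ are the earlier middle-block indices, because the initial interval $[0,s_0 r-1]$ and the short intervals between it and the middle block carry the large values of $f_l$ (the near-reversal pattern $f_l(i)\approx m-1-i$). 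For a high-block index $t=m-d\cdot s_0$ I would note that $t$ is the left end of the $d$-th of the $r$ consecutive segments into which the clauses partition $[m-r\cdot s_0,m-1]$, and that every $f_l$-value occurring strictly to the right of $t$ is smaller than $f_l(t)$; hence $\#\{j>t:f_l(j)<f_l(t)\}=m-1-t=d\cdot s_0-1$. Summing over $d$ gives the second sum in closed form.

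Finally I would subtract the two sums, substitute $n=2s_0+1$ and $m=nr+s$ with $s=2s_1$, and simplify; the three displayed cases arise precisely from the bounded-size differences isolated above (the ``$-1$'' versus no shift in the high-block values, and the index shift together with the single boundary correction on the middle block when $l=8$). The step I expect to be the main obstacle is the inversion count for high-block endpoints: one must check, across all the clauses defining $f_l$, that no index to the right of $m-d\cdot s_0$ carries a value as large as $f_l(m-d\cdot s_0)$; and---just as in the proof of Lemma~\ref{lemma: D41}---the count over the low region must be organised according to the parity of $s_0$, even though this parity does not survive into the final formula.
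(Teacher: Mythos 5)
Your decomposition of $S_2$ into the middle block $\{s_0\cdot r+s_1+d\}$ and the high block $\{m-d\cdot s_0\}$, the evaluation of $(m-1-t)f_l(t)+t(m-1-f_l(t))$ as a quadratic in $d$ on each block, and the subtraction of the per-element inversion counts (which work out to $d(s_0-1)$ for the middle block and $s_0\cdot d-1$ for the high block) is exactly the computation the paper carries out; the paper merely writes these counts down without the justification you supply via the permutation identity and the observation that $m-d\cdot s_0$ is the left end of its segment. The proposal is correct and takes essentially the same approach.
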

\begin{proof} For $l=4,5,6$, we have
{\tiny
$$\begin{array}{rlll}
F_{l,2}=&\sum_{d=1}^{r}((m-s_0\cdot r-s_1-d)(s_0\cdot d-1)+(s_0\cdot r+s_1+d-1)(m-s_0\cdot d))\hspace{400bp}\\
 &+\sum_{d=1}^{r}((s_0\cdot d-1)(s_0\cdot r+s_1+d-1)+(m-s_0\cdot d)(m-s_0\cdot r-s_1-d))\\
 &-(\sum_{d=1}^{r}d(s_0-1)+\sum_{d=1}^{r}(s_0\cdot d-1))\\
=&\frac{-(2s_0-1)r^{2}+(-2s_0+2m^{2}-4m+5)r}{2}.
\end{array}$$
} For $l=7$, we have {\tiny
$$\begin{array}{rlll}
F_{7,2}=&\sum_{d=1}^{r}((m-s_0\cdot r-s_1-d)(s_0\cdot d-1)+(s_0\cdot r+s_1+d-1)(m-s_0\cdot d))\hspace{400bp}\\
 &+\sum_{d=1}^{r}((s_0\cdot d-1)(s_0\cdot r+s_1+d)+(m-s_0\cdot d)(m-s_0\cdot r-s_1-d-1))\\
 &-(\sum_{d=1}^{r}d(s_0-1)+\sum_{d=1}^{r}(s_0\cdot d-1))\\
 =&\frac{r^{2}+(2m^{2}-6m+3)r}{2}.
\end{array}$$
} For $l=8$, we have {\tiny
$$\begin{array}{rlll}
F_{8,2}=&\sum_{d=1}^{r}((m-1-s_0\cdot r-s_1-d)(s_0\cdot d-1)+(s_0\cdot r+s_1+d)(m-s_0\cdot d))\hspace{400bp}\\
 &+\sum_{d=1}^{r}((s_0\cdot d-1)(s_0\cdot r+s_1+d-1)+(m-s_0\cdot d)(m-s_0\cdot r-s_1-d))\\
 &-(\sum_{d=1}^{r}(s_0-1)d+\sum_{d=1}^{r}(s_0\cdot d-1))\\
=&\frac{(-4s_0+1)r^{2}+(-4s_0+2m^{2}-2m+7)r}{2}.
\end{array}$$
}
\end{proof}
\begin{lemma}\label{lemma: D433}
{\tiny
$$\begin{array}{llll}
F_{l,3}=\left \{ \begin{array}{llll}
\frac{12s_0^{2}s_1\cdot r^{2}+3(4s_0\cdot s_1^{2}-(4m\cdot s_0-1)s_1)r+4s_1^{3}-6m\cdot s_1^{2}+(6m^{2}-9m+2)s_1+3\frac{1-(-1)^{s_1}}{2}}{3}& \mbox{ if } l=4 \mbox{ and } s\geq 2  \mbox{ is even},\\
\frac{12(s_0^{2}s_1-2s_0^{2})r^{2}+3(4s_0s_1^{2}-(4(m+4)s_0-1)s_1+(8m+10)s_0+1)r}{3}& \\
+\frac{4s_1^{3}-6(m+4)s_1^{2}+(6m^{2}+15m+32)s_1-(6m^{2}+15m)+3\frac{1+(-1)^{s_1}}{2}}{3}& \mbox{ if } l=5 \mbox{ and } s\geq 2  \mbox{ is even},\\
\frac{12s_0^{2}s_1\cdot r^{2}+3(4s_0\cdot s_1^{2}-(4m\cdot s_0-1)s_1)r+4s_1^{3}-6m\cdot s_1^{2}+(6m^{2}-9m+5)s_1}{3}& \mbox{ if } l=6 \mbox{ and } s  \mbox{ is even},\\
-2(s_0^{2}+2s_0+1)r^{2}+((2m-3)s_0+2(m-1))r& \mbox{ if } l=4 \mbox{ and } s=1,\\
\frac{6(2s_0^{2}\cdot s_1+s_0^{2}+s_0)r^{2} +3(4s_0\cdot s_1^{2}-(4(m-1)s_0-3)s_1-(2m-1)s_0-m-1)r}{3}& \\
+\frac{4s_1^{3}-6(m-1)s_1^{2}+(6m^{2}-15m+5)s_1+3m^{2}-6m+3}{3}& \mbox{ if } l=4 \mbox{ and } s\geq 3  \mbox{ is odd},\\
\frac{6(2s_0^{2}\cdot s_1+s_0^{2})r^{2}+6(2s_0\cdot s_1^{2}-2(m-1)s_0\cdot s_1-(m-1)s_0)r+4s_1^{3}-6(m-1)s_1^{2}+(6m^{2}-12m+8)s_1+3m^{2}-6m+3}{3}& \mbox{ if } l=7 \mbox{ and } s  \mbox{ is odd},\\
\frac{6(2s_0^{2}\cdot s_1+s_0^{2})r^{2}+3(4s_0\cdot s_1^{2}-(4(m-1)s_0-1)s_1-(2m-3)s_0)r+4s_1^{3}-6(m-1)s_1^{2}+(6m^{2}-15m+14)s_1+3m^{2}-9m+6}{3}& \mbox{ if } l=8 \mbox{ and } s  \mbox{ is
odd}.
                \end{array}
     \right .
\end{array}$$
}
\end{lemma}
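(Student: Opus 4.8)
The plan is to compute $F_{l,3}$ directly from its definition
$$F_{l,3}=\sum_{t\in S_3}\bigl((m-1-t)f_l(t)+t(m-1-f_l(t))\bigr)-\sum_{t\in S_3}\sum_{j=t+1}^{m-1}inv_{l,t,j},$$
splitting the work into the ``weight'' part (the first sum) and the ``inversion'' part (the second). First I would pin down exactly which indices lie in $S_3$ and what $f_l$ does there: by construction $S_3$ is the union of the two short blocks $s_0r\le t\le s_0r+s_1-1$ and $s_0r+s_1+r\le t\le s_0r+s$ (shifted by one for $l=8$), and on those blocks every defining line for $f_l$ has one of the two shapes $f_l(t)=m-1-t$ or $f_l(t)=m-1-t\pm(-1)^{t+c}$ for an explicit constant $c$, together with at most a couple of ``exceptional'' values such as $f_l(s_0r)=s_0r+s+r-1$ (and its analogues for the other $f_l$). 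Thus on $S_3$ the permutation is an anti-diagonal segment perturbed by a few adjacent transpositions plus at most two exceptional endpoints; which lines are actually non-empty is what dictates the seven cases in the statement (for instance $l=4$, $s=1$ collapses most lines and so is genuinely separate), and whether the exceptional endpoint sits on an even or odd index is what produces the $\tfrac{1\mp(-1)^{s_1}}{2}$ terms.

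For the weight part, writing $f_l(t)=m-1-t+\varepsilon_t$ with $\varepsilon_t\in\{0,\pm 1\}$ away from the exceptional endpoints, each summand becomes $(m-1-t)^2+t^2+\varepsilon_t(m-1-2t)$. Hence the weight part reduces to the polynomial sums $\sum t$ and $\sum t^2$ over the two arithmetic ranges (evaluated by the usual closed forms), plus a contribution from the alternating $\varepsilon_t$'s in which adjacent pairs $t,t+1$ with $\varepsilon_t=-\varepsilon_{t+1}$ cancel except possibly at one end, plus an explicit correction for each of the finitely many exceptional values. This step is routine once the ranges and endpoints are fixed.

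The inversion part is the crux. For $t\in S_3$ I would split $\{j>t:f_l(j)<f_l(t)\}$ into three groups: (i) $j\in S_3$ with $j>t$, where $f_l$ is essentially anti-monotone so these contribute exactly one inversion per adjacent swapped pair, an $O(s_1)$ tally read straight off the $\pm(-1)^{t+c}$ pattern; (ii) $j$ in the middle block of $S_2$ (sandwiched between the two $S_3$ blocks), where $f_l(j)$ ranges over the small values $(d{+}1)s_0-1$ and $s_0r+s_1+d$, so for each such $t$ one counts how many of these lie below $f_l(t)\approx m-1-t$, a count that is linear in the block index $d$ and hence sums to a quadratic in $r$; and (iii) $j$ in the large-index tail (the parts of $S_1$ and $S_2$ near $m-1$), where $f_l(j)$ is again small and is handled the same way. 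Summing the three contributions over $t\in S_3$, subtracting from the weight part, using $\sum i$ and $\sum i^2$, and collecting by the parity of $s_1$ yields each of the seven claimed polynomials.

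The main obstacle I anticipate is precisely the bookkeeping in groups (ii)–(iii): the images under $f_l$ of the blocks outside $S_3$ are interleaved in a way that depends on $r,s_0,s_1$ and their parities, so one must be scrupulous about off-by-one effects and about which of the small values $(d{+}1)s_0-1$ actually fall below a given near-anti-diagonal value $m-1-t+\varepsilon_t$. I would organize this by fixing $t$, expressing the number of qualifying $j$ outside $S_3$ as (number of full size-$s_0$ blocks lying below level $m-1-t$) plus a bounded correction, then summing the resulting arithmetic progression in $t$; the dependence on the parity of $s_0$ washes out in the final answer even though it appears in intermediate case splits. Finally, cross-checking the resulting formulas against the small drawings in Figures~\ref{fig: D43} and \ref{fig: D65} (e.g.\ $m=4,\dots,7$ with $n=3$, and $m=6,\dots,11$ with $n=5$) gives a concrete sanity check on the constants.
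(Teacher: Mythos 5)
Your plan — evaluate $F_{l,3}$ straight from its definition by splitting into the weight sum $\sum_{t\in S_3}\bigl((m-1-t)f_l(t)+t(m-1-f_l(t))\bigr)$ and the inversion count, using the explicit near-anti-diagonal form of $f_l$ on $S_3$ and reducing everything to arithmetic-series identities, with the case split driven by $l$, the parity of $s$, and the parity of $s_1$ — is exactly what the paper does; the paper simply writes out the resulting explicit sums for each case and states their closed forms without narrating the bookkeeping you describe. Your outline is a correct account of the same computation (and your three-group decomposition of the inversion count over $j>t$ is the right way to organize the part the paper leaves implicit), so no substantive difference in approach.
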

\begin{proof} We first consider the cases of even $s$. For $l=4$ and even $s_1\geq 2$, we have {\tiny
$$\begin{array}{rlll}
F_{4,3}=&4\sum_{t=1}^{\frac{s_1}{2}}((m-s_0\cdot r-2t)(m-s_0\cdot r-2t+1)+(s_0\cdot r+2t-1)(s_0\cdot r+2t-2))\hspace{400bp}\\
 &-(2\sum_{t=1}^{\frac{s_1}{2}}(m-s_0\cdot r-2t)+2\sum_{t=1}^{\frac{s_1}{2}}(s_0-1)r-2+2t))\\
=&\frac{12s_0^{2}s_1\cdot r^{2}+3(4s_0\cdot s_1^{2}-(4m\cdot s_0-1)s_1)r+4s_1^{3}-6m\cdot s_1^{2}+(6m^{2}-9m+2)s_1}{3}.\\
\end{array}$$
} For $l=4$ and odd $s_1$, we have {\tiny
$$\begin{array}{rlll}
F_{4,3}=&2((m-1-s_0\cdot r)^{2}+(s_0\cdot r)^{2})\\
 &+4\sum_{t=1}^{\frac{s_1-1}{2}}((m-1-s_0\cdot r-2t)(m-s_0\cdot r-2t)+(s_0\cdot r+2t)(s_0\cdot r-1+2t))\hspace{400bp}\\
 &-(m-1-s_0\cdot r+2\sum_{t=1}^{\frac{s_1-1}{2}}(m-1-s_0\cdot r-2t)+2\sum_{t=1}^{\frac{s_1-1}{2}}((s_0-1)r-1+2t)+s_0r-r)\\
=&\frac{12s_0^{2}s_1\cdot r^{2}+3(4s_0\cdot s_1^{2}-(4m\cdot s_0-1)s_1)r+4s_1^{3}-6m\cdot s_1^{2}+(6m^{2}-9m+2)s_1+3}{3}.\\
\end{array}$$
} For $l=5$ and even $s_1\geq 2$, we have {\tiny
$$\begin{array}{rlll}
F_{5,3}=&2((m-1-s_0\cdot r)^{2}+(s_0\cdot r)^{2})\\
 &+4\sum_{t=1}^{\frac{s_1-2}{2}}((m-1-s_0\cdot r-2t)(m-s_0\cdot r-2t)+(s_0\cdot r+2t)(s_0\cdot r-1+2t))\hspace{400bp} \\
 &+2((m-s_0\cdot r-s_1)(s_0\cdot r+s_1-1)+(s_0\cdot r+s_1-1)(m-s_0\cdot r-s1))\\
 &-(m-1-s_0r+2\sum_{t=1}^{\frac{s_1-2}{2}}(m-1-s_0\cdot r-2t)+2(m-1-s_0\cdot r-s_1-r)+2\sum_{t=1}^{\frac{s_1-2}{2}}(s_0-1)r-1+2t)+(s0-1)r)\\
=&\frac{12(s_0^{2}s_1-2s_0^{2})r^{2}+3(4s_0s_1^{2}-(4(m+4)s_0-1)s_1+(8m+10)s_0+1)r+4s_1^{3}-6(m+4)s_1^{2}+(6m^{2}+15m+32)s_1-(6m^{2}+15m)+3}{3}.\\
\end{array}$$
} For $l=5$ and odd $s_1$, we have {\tiny
$$\begin{array}{rlll}
F_{5,3}=&4\sum_{t=1}^{\frac{s_1-1}{2}}((m-s_0\cdot r-2t)(m-s_0\cdot r-2t+1)+(s_0\cdot r+2t-1)(s_0\cdot r+2t-2))\hspace{400bp}\\
 &+2((m-s_0\cdot r-s_1)(s_0\cdot r+s_1-1)+(s_0\cdot r+s_1-1)(m-s_0\cdot r-s1))\\
 &-(2\sum_{t=1}^{\frac{s_1-1}{2}}(m-s_0r-2t)+2(m-1-s_0\cdot r-s_1-r)+2\sum_{t=1}^{\frac{s_1-1}{2}}((s_0-1)r-2+2t))\\
=&\frac{12(s_0^{2}s_1-2s_0^{2})r^{2}+3(4s_0s_1^{2}-(4(m+4)s_0-1)s_1+(8m+10)s_0+1)r+4s_1^{3}-6(m+4)s_1^{2}+(6m^{2}+15m+32)s_1-(6m^{2}+15m)}{3}.\\
\end{array}$$
} For $l=6$, we have{\tiny
$$\begin{array}{rlll}
F_{6,3}=&2\sum_{t=1}^{s_1}((m-s_0\cdot r-t)(m-s_0\cdot r-t)+(s_0\cdot r+t-1)+(s_0\cdot r+t-1)\hspace{400bp}\\
 &-\sum_{t=1}^{s1}(m-s_0r-t)+\sum_{t=1}^{s_1}(s_0-1)r+t-1)\\
=&\frac{12s_0^{2}s_1\cdot r^{2}+3(4s_0\cdot s_1^{2}-(4m\cdot s_0-1)s_1)r+4s_1^{3}-6m\cdot s_1^{2}+(6m^{2}-9m+5)s_1}{3}.\\
\end{array}$$
} Now, we consider the cases of odd $s$. For $l=4$ and even $s_1=0$
$(s=1)$, we have {\tiny
$$\begin{array}{rlll}
F_{4,3}=&(m-1-s_0\cdot r-r)(s_0\cdot r+r)+(s0\cdot r+r)(m-1-s_0\cdot r-r)-s_0\cdot r \hspace{400bp}\\
=&-2(s_0^{2}+2s_0+1)r^{2}+((2m-3)s_0+2(m-1))r.\\
\end{array}$$
} For $l=4$ and even $s_1\geq 2$, we have {\tiny
$$\begin{array}{rlll}
F_{4,3}=&2\sum_{t=1}^{\frac{s_1}{2}}((m-s_0\cdot r-2t)(m-s_0\cdot r-2t+1)+(s_0\cdot r+2t-1)(s_0\cdot r+2t-2))\hspace{400bp}\\
 &+(s_0\cdot r+s_1)(s_0\cdot r+s_1-1)+(m-1-s_0\cdot r-s_1)(m-s_0\cdot r-s_1)\\
 &+(s_0\cdot r+s_1-1)(s_0\cdot r+s_1+r)+(m-s_0\cdot r-s_1)(m-1-s_0\cdot r-s_1-r)\\
 &+2\sum_{t=1}^{\frac{s_1-2}{2}}((s_0\cdot r+s_1-2t)(s_0\cdot r+s_1-2t-1)+(m-1-s_0\cdot r-s_1+2t)(m-s_0\cdot r-s_1+2t))\hspace{400bp}\\
 &+(s_0\cdot r)^{2}+(m-1-s_0\cdot r)^{2}\\
 &-(2\sum_{t=1}^{\frac{s_1}{2}}(m-s_0\cdot r-2t)+(s_0-1)r+s_1-1+s_0\cdot r+s_1-1+2\sum_{t=1}^{\frac{s_1-2}{2}}((s_0-1)r+2t-1)+(s_0-1)r)\\
=&\frac{6(2s_0^{2}\cdot s_1+s_0^{2}+s_0)r^{2} +3(4s_0\cdot s_1^{2}-(4(m-1)s_0-3)s_1-(2m-1)s_0-m-1)r+4s_1^{3}-6(m-1)s_1^{2}+(6m^{2}-15m+5)s_1+3m^{2}-6m+3}{3}.\\
\end{array}$$
} For $l=4$ and odd $s_1$, we have {\tiny
$$\begin{array}{rlll}
F_{4,3}=&(m-1-s_0\cdot r)^{2}+(s_0\cdot r)^{2}\\
 &+2\sum_{t=1}^{\frac{s_1-1}{2}}((m-1-s_0\cdot r-2t)(m-s_0\cdot r-2t)+(s_0\cdot r+2t)(s_0\cdot r-1+2t))\hspace{400bp}\\
 &+(s_0\cdot r+s_1)(s_0\cdot r+s_1-1)+(m-1-s_0\cdot r-s_1)(m-s_0\cdot r-s_1)\\
 &+(s_0\cdot r+s_1-1)(s_0\cdot r+s_1+r)+(m-s_0\cdot r-s_1)(m-1-s_0\cdot r-s_1-r)\\
 &+2\sum_{t=1}^{\frac{s_1-1}{2}}((s_0\cdot r+s_1-2t)(s_0\cdot r+s_1-2t-1)+(m-1-s_0\cdot r-s_1+2t)(m-s_0\cdot r-s_1+2t))\hspace{400bp}\\
 &-(m-1-s_0\cdot r+2\sum_{t=1}^{\frac{s_1-1}{2}}(m-s_0\cdot r-2t)+(s_0-1)r+s_1-1+s_0\cdot r+s_1-1+2\sum_{t=1}^{\frac{s_1-1}{2}}((s_0-1)r+2t-2))\\
=&\frac{6(2s_0^{2}\cdot s_1+s_0^{2}+s_0)r^{2} +3(4s_0\cdot s_1^{2}-(4(m-1)s_0-3)s_1-(2m-1)s_0-m-1)r+4s_1^{3}-6(m-1)s_1^{2}+(6m^{2}-15m+5)s_1+3m^{2}-6m+3}{3}.\\
\end{array}$$
} For $l=7$, we have{\tiny
$$\begin{array}{rlll}
F_{7,3}=&\sum_{t=1}^{s_1}((m-s_0\cdot r-t)(m-s_0\cdot r-t)+(s_0\cdot r+t-1)+(s_0\cdot r+t-1)\hspace{400bp}\\
 &+\sum_{t=1}^{s_1+1}((m-s_0\cdot r-t)(m-s_0\cdot r-t)+(s_0\cdot r+t-1)+(s_0\cdot r+t-1)\\
 &-\sum_{t=1}^{s1}(m-s_0r-t)+\sum_{t=1}^{s_1+1}(s_0-1)r+t-1)\\
=&\frac{6(2s_0^{2}\cdot s_1+s_0^{2})r^{2}+3(4s_0\cdot s_1^{2}-(4(m-1)s_0-1)s_1-(2m-1)s_0+1)r+4s_1^{3}-6(m-1)s_1^{2}+(6m^{2}-15m+8)s_1+3m^{2}-6m+3}{3}.\\
\end{array}$$
} For $l=8$, we have{\tiny
$$\begin{array}{rlll}
F_{8,3}=&\sum_{t=1}^{s_1+1}((m-s_0\cdot r-t)(m-s_0\cdot r-t)+(s_0\cdot r+t-1)+(s_0\cdot r+t-1)\hspace{400bp}\\
 &+\sum_{t=1}^{s_1}((m-s_0\cdot r-t)(m-s_0\cdot r-t)+(s_0\cdot r+t-1)+(s_0\cdot r+t-1)\\
 &-\sum_{t=1}^{s1+1}(m-s_0r-t)+\sum_{t=1}^{s_1}(s_0-1)r+t-1)\\
=&\frac{6(2s_0^{2}\cdot s_1+s_0^{2})r^{2}+3(4s_0\cdot s_1^{2}-(4(m-1)s_0-1)s_1-(2m-3)s_0)r+4s_1^{3}-6(m-1)s_1^{2}+(6m^{2}-15m+14)s_1+3m^{2}-9m+6}{3}.\\
\end{array}$$
}
\end{proof}
By Lemmas \ref{lemma: D2}-\ref{lemma: D433}, we have
\begin{lemma}\label{lemma: D44} For $m>$ odd $n\geq 3$,{\tiny
$$\begin{array}{llll}
\nu(D_{m,n})&=\left \{ \begin{array}{llll} \frac{m(m-1)(m-2)(3m-5)n+2m^{3}-3m^{2}\cdot n+m\cdot n^{2}+4m\cdot n-n^{2}-7m-n+5+(2m\cdot n+4m+13n+8)(\frac{m-1}{n})^{2}}{12}& \mbox{ if } s=1\\
\frac{m(m-1)(m-2)(3m-5)n+(2n^{3}-4n^{2}-8n)\lfloor\frac{m}{n}\rfloor^{3}-(6m\cdot n^{2}-3n^{3}-6m\cdot n-12m+15n)\lfloor\frac{m}{n}\rfloor^{2}}{12}&\\
              +\frac{(6m^{2}n-6m\cdot n^{2}+n^{3}-6m\cdot n+4n^{2}+24m-13n+3n(1-(-1)^{m-n\lfloor\frac{m}{n}\rfloor}))\lfloor\frac{m}{n}\rfloor+6m^{2}-6m}{12}& \mbox{ if } s\neq 1 .
                \end{array}
     \right .
\end{array}$$
}
\end{lemma}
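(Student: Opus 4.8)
The plan is to sum the crossings layer by layer. For $j_1\neq j_2$ the edges of $E^{j_1}$ and $E^{j_2}$ lie in disjoint helical bands of the cylinder, so $\nu_{D_{m,n}}(E^{j_1},E^{j_2})=0$ and Lemma~\ref{lemma: E12} gives $\nu(D_{m,n})=\sum_{j=0}^{n-1}\nu_{D_{m,n}}(E^{j})$. By Lemma~\ref{lemma: D2}, whenever $f^{j}=f_{l}$ the $j$-th summand equals $\binom{m}{2}\binom{m}{2}-\big(F_{l,1}+F_{l,2}+F_{l,3}\big)$, and Lemmas~\ref{lemma: D41}--\ref{lemma: D433} evaluate each $F_{l,k}$ as an explicit polynomial in $r,s,s_0,s_1$. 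So the proof reduces to (i) counting how many layers carry each type $f_{4},\dots,f_{8}$, and (ii) simplifying the resulting weighted sum.

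For step (i) I read off the assignment rule for $m>$ odd $n$. If $s$ is even, the layers $j=0,\dots,s-1$ alternate $f_{5}$ (even $j$) and $f_{4}$ (odd $j$), giving $s_{1}$ copies of $f_{5}$ and $s_{1}$ copies of $f_{4}$, while the layers $j=s,\dots,n-1$ all carry $f_{6}$, giving $n-s$ copies. If $s$ is odd and $s\geq3$, the layers $j=0,\dots,s-1$ all carry $f_{4}$ ($s$ copies), and the block $j=s,\dots,n-1$ has even length and begins at the odd index $s$, so it contributes $\tfrac{n-s}{2}$ copies of $f_{7}$ (odd $j$) and $\tfrac{n-s}{2}$ copies of $f_{8}$ (even $j$); the case $s=1$ is identical with $s$ replaced by $1$, noting that $r=(m-1)/n$ there and that $F_{4,3}$ must be taken in its special $s=1$ form from Lemma~\ref{lemma: D433}. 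In every case the counts sum to $n$, so the total is $n\binom{m}{2}\binom{m}{2}$ minus a weighted sum of the $F_{l,1}$'s, of the $F_{l,2}$'s and of the $F_{l,3}$'s with these weights. Because $F_{l,1}$ is independent of $l$ (Lemma~\ref{lemma: D41}) and $F_{l,2}$ is the same for $l\in\{4,5,6\}$ (Lemma~\ref{lemma: D42}), the $F_{l,1}$-part is simply $nF_{4,1}$ and, in the even-$s$ case, the $F_{l,2}$-part is $nF_{4,2}$; only the $F_{l,3}$-part genuinely mixes several formulas.

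For step (ii) I substitute the polynomials of Lemmas~\ref{lemma: D41}, \ref{lemma: D42} and \ref{lemma: D433}, keep everything in the variables $r,s,s_{0},s_{1}$ while collecting the coefficients of $r^{3},r^{2},r,1$ separately, and only at the end put $s_{0}=\tfrac{n-1}{2}$, $n=2s_{0}+1$, $s_{1}=\lfloor s/2\rfloor$, $s=m-n\lfloor m/n\rfloor$ and $r=\lfloor m/n\rfloor$. Since every layer contributes $\tfrac{m(m-1)(m-2)(3m-5)}{12}$ plus an ``excess'' depending only on $(l,r,s,s_{0},s_{1})$ (compare Lemmas~\ref{lemma: D1}--\ref{lemma: D3}, where $f_1$ realizes exactly the base value), the leading term $n\cdot\tfrac{m(m-1)(m-2)(3m-5)}{12}$ drops out at once and only the excesses have to be combined. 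One then checks that the even-$s$ computation and the odd-$s$ ($s\geq3$) computation collapse to the \emph{same} closed form, the parity being carried entirely by the term $3n\big(1-(-1)^{\,m-n\lfloor m/n\rfloor}\big)$ (zero for even $s$, equal to $6n$ for odd $s$), while the $s=1$ case yields the separate displayed expression.

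The main obstacle is precisely this final simplification: the weighted cubic-in-$r$ sum, the weighted quadratic-in-$r$ sums, and the bulky $F_{l,3}$ terms — which also carry the parity corrections $\tfrac{1\pm(-1)^{s_{1}}}{2}$ — have to be expanded, have $s$ eliminated via $s=m-nr$, and be reorganized until they match the two formulas in the statement; this is long but mechanical, and I would guard against slips by verifying the resulting formulas on the small drawings $D_{m,3}$ for $4\leq m\leq7$ and $D_{m,5}$ for $6\leq m\leq11$ shown in Figures~\ref{fig: D43} and \ref{fig: D65}.
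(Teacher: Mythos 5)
Your proposal follows essentially the same route as the paper: decompose $\nu(D_{m,n})$ as $\sum_j \nu_{D_{m,n}}(E^j)$, use Lemma \ref{lemma: D2} with the explicit polynomials of Lemmas \ref{lemma: D41}--\ref{lemma: D433}, count $s_1$ layers each of $f_4,f_5$ plus $n-s$ of $f_6$ for even $s$, and $s$ layers of $f_4$ plus $\tfrac{n-s}{2}$ each of $f_7,f_8$ for odd $s$ (with $s=1$ isolated because of the special form of $F_{4,3}$), and then simplify. The multiplicities and the final bookkeeping match the paper's computation, so the approach is correct and not materially different.
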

\begin{proof}For even $s$, we have
{\tiny
$$\begin{array}{rlll}
  &\nu(D_{m,n})\\
 =&s_1((^{m}_2)(^{m}_2)-(F_{4,1}+F_{4,2}+F_{4,3})+(^{m}_2)(^{m}_2)-(F_{5,1}+F_{5,2}+F_{5,3}))+(n-2s_1)((^{m}_2)(^{m}_2)-(F_{6,1}+F_{6,2}+F_{6,3})) \hspace{400bp}\\
 =&\frac{(3n\cdot m^{2}(m-1)^{2}-8n(2s_0^{3}-s_0^{2})r^{3}-6(8n\cdot s_0^{2}\cdot s_1-(4m\cdot n-2n)s_0^{2}+2(m\cdot n+n+4s_1+2)s_0)r^{2}}{12}&\\
 &-\frac{2(24n\cdot s_0\cdot s_1^{2}-6(4(m\cdot n+4s_1)s_0-n)s_1-2n\cdot s_0^{2}+(12n\cdot m^{2}-12(n-4s_1)m+4n+60s_1)s_0-12n\cdot m+12n+6s_1)r}{12}&\\
 &-\frac{2(8n\cdot s_1^{3}-12(m\cdot n+4s_1)s_1^{2}+(12n\cdot m^{2}-(18n-48s_1)m+10n+48s_1)s_1-(12m^{2}+30m-6)s_1)}{12}\\
 =&\frac{m(m-1)(m-2)(3m-5)n+(2n^{3}-4n^{2}-8n)\lfloor\frac{m}{n}\rfloor^{3}-(6m\cdot n^{2}-3n^{3}-6m\cdot n-12m+15n)\lfloor\frac{m}{n}\rfloor^{2}}{12}&\\
  &+\frac{(6m^{2}n-6m\cdot n^{2}+n^{3}-6m\cdot n+4n^{2}+24m-13n+3n(1-(-1)^{m-n\lfloor\frac{m}{n}\rfloor}))\lfloor\frac{m}{n}\rfloor+6m^{2}-6m}{12}.
\end{array}$$
} For $s=1$, we have {\tiny
$$\begin{array}{rlll}
  &\nu(D_{m,n})\\
 =&(2s_1+1)((^{m}_2)(^{m}_2)-(F_{4,1}+F_{4,2}+F_{4,3}))+\frac{n-2s_1-1}{2}((^{m}_2)(^{m}_2)-(F_{7,1}+F_{7,2}+F_{7,3})+(^{m}_2)(^{m}_2)-(F_{8,1}+F_{8,2}+F_{8,3})) \hspace{400bp}\\
 =&\frac{3n\cdot m^{2}(m-1)^{2}-8n(2s_0^{3}-s_0^{2})r^{3}-6(-(4m\cdot n-2n+8)s_0^{2}+2(m\cdot n+n-4)s_0-4)r^{2}}{12}\\
 &-\frac{2(-2n\cdot s_0^{2}+(12m^{2}\cdot n-24m(n-1)+16n-30)s_0-12m(n-1)+15(n-1))r+12m^{2}(n-1)-30m(n-1)+18(n-1)}{12}&\\
 =&\frac{m(m-1)(m-2)(3m-5)n+2m^{3}-3m^{2}\cdot n+m\cdot n^{2}+4m\cdot n-n^{2}-7m-n+5+(2m\cdot n+4m+13n+8)(\frac{m-1}{n})^{2}}{12}&\\
\end{array}$$
} For odd $s$, we have {\tiny
$$\begin{array}{rlll}
  &\nu(D_{m,n})\\
 =&(2s_1+1)((^{m}_2)(^{m}_2)-(F_{4,1}+F_{4,2}+F_{4,3}))+\frac{n-2s_1-1}{2}((^{m}_2)(^{m}_2)-(F_{7,1}+F_{7,2}+F_{7,3})+(^{m}_2)(^{m}_2)-(F_{8,1}+F_{8,2}+F_{8,3})) \hspace{400bp}\\
 =&\frac{(3n\cdot m^{2}(m-1)^{2}-8n(2s_0^{3}-s_0^{2})r^{3}-6(8n\cdot s_0^{2}\cdot s_1-(4m\cdot n+2n+16s_1)s_0^{2}+2n(m+1)s_0)r^{2}}{12}&\\
 &-\frac{2(24n\cdot s_0\cdot s_1^{2}-6(4n(m-1)s_0-n-4s_1-2)s_1-2n\cdot s_0^{2}+(12m^{2}\cdot n-24m\cdot n+16n-12s_1-6)s_0-m(12n+12s_1+6)+15n-18s_1-9)r}{12}&\\
 &-\frac{2(8n\cdot s_1^{3}-12n(m-1)s_1^{2}+(12m^{2}\cdot n-30m\cdot n+22n-24s_1-12)s_1+6m^{2}\cdot n-m(15n-6s_1-3)+9n-6s_1-3)}{12}\\
 =&\frac{m(m-1)(m-2)(3m-5)n+(2n^{3}-4n^{2}-8n)\lfloor\frac{m}{n}\rfloor^{3}-(6m\cdot n^{2}-3n^{3}-6m\cdot n-12m+15n)\lfloor\frac{m}{n}\rfloor^{2}}{12}&\\
  &+\frac{(6m^{2}n-6m\cdot n^{2}+n^{3}-6m\cdot n+4n^{2}+24m-13n+3n(1-(-1)^{m-n\lfloor\frac{m}{n}\rfloor}))\lfloor\frac{m}{n}\rfloor+6m^{2}-6m}{12}.
\end{array}$$
}
\end{proof}
\begin{lemma}\label{lemma: D43} For $m>n=3$, $\nu(D_{m,n})\leq\frac{n\cdot m(m-1)(m-2)(3m-5)}{12}+\frac{28m^{3}-54m^{2}+42m+16}{108}.$
\end{lemma}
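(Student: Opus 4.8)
The plan is to read off the value of $\nu(D_{m,3})$ from Lemma \ref{lemma: D44} by specializing $n=3$, and then to compare the resulting polynomial in $m$ with the claimed right-hand side in each residue class of $m$ modulo $3$. Write $r=\lfloor m/3\rfloor$ and $s=m\bmod 3$, so that $m>3$ falls into exactly one of $m=3r$ ($s=0$), $m=3r+1$ ($s=1$), $m=3r+2$ ($s=2$). In Lemma \ref{lemma: D44} the parity term $1-(-1)^{m-n\lfloor m/n\rfloor}$ is $1-(-1)^{s}$, which vanishes for $s=0$ and $s=2$, and the first branch of the lemma applies when $s=1$ while the second branch applies when $s\neq 1$. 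Setting $n=3$ and collecting terms, the second branch becomes
\[
\nu(D_{m,3})=\frac{3m(m-1)(m-2)(3m-5)}{12}+\frac{-6r^{3}+(36-24m)r^{2}+(18m^{2}-48m+24)r+6m^{2}-6m}{12}
\]
for $s\in\{0,2\}$, and the first branch becomes
\[
\nu(D_{m,3})=\frac{3m(m-1)(m-2)(3m-5)}{12}+\frac{(10m+47)r^{2}+2m^{3}-9m^{2}+14m-7}{12}
\]
for $s=1$.

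Next I would substitute the explicit value of $r$ into each case and simplify. For $s=2$, $r=(m-2)/3$, and expanding the first displayed expression makes the correction term collapse to exactly $\frac{28m^{3}-54m^{2}+42m+16}{108}$, so the claimed bound holds with equality. For $s=0$, $r=m/3$, and the same expression yields correction term $\frac{28m^{3}-54m^{2}+18m}{108}$, which falls short of the target by $\frac{24m+16}{108}>0$. For $s=1$, $r=(m-1)/3$, and the second displayed expression yields correction term $\frac{28m^{3}-54m^{2}+42m-16}{108}$, which falls short of the target by $\frac{32}{108}>0$. In all three cases $\nu(D_{m,3})\leq\frac{3m(m-1)(m-2)(3m-5)}{12}+\frac{28m^{3}-54m^{2}+42m+16}{108}$, which is the assertion.

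The only actual work is the polynomial bookkeeping of substituting $r=m/3,(m-1)/3,(m-2)/3$ into the at-most-cubic expressions above and simplifying; this is mechanical. The one step that needs care is matching each residue $s$ to the correct branch of Lemma \ref{lemma: D44} and correctly evaluating the parity term $1-(-1)^{s}$ in the second branch. It is also worth noting that $s=2$ is the extremal case that makes the stated bound tight, which explains the precise form of the constant $\frac{28m^{3}-54m^{2}+42m+16}{108}$.
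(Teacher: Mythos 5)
Your proposal is correct and follows exactly the paper's own route: specialize Lemma \ref{lemma: D44} to $n=3$, split into the residue classes $s=0,1,2$ of $m$ modulo $3$, substitute $r=\lfloor m/3\rfloor$ explicitly, and compare the resulting cubics with the stated bound. Your three computed correction terms ($\frac{28m^{3}-54m^{2}+18m}{108}$ for $s=0$, $\frac{28m^{3}-54m^{2}+42m-16}{108}$ for $s=1$, and equality at $\frac{28m^{3}-54m^{2}+42m+16}{108}$ for $s=2$) match the paper's verbatim.
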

\begin{proof} By Lemma \ref{lemma: D44}, for $s=1$, we have
{\tiny
$$\begin{array}{rlll}
  &\nu(D_{m,n})\\
 =&\frac{n\cdot m(m-1)(m-2)(3m-5)+2m^{3}-9m^{2}       +9m          +12m      -9    -7m-3+5+(6m       +4m+39 +8)(\frac{m-1}{3})^{2}}{12}\hspace{400bp}\\
 =&\frac{n\cdot m(m-1)(m-2)(3m-5)}{12}+\frac{28m^{3}-54m^{2}+42m-16}{108}\\
 \leq&\frac{n\cdot m(m-1)(m-2)(3m-5)}{12}+\frac{28m^{3}-54m^{2}+42m+16}{108}.
\end{array}$$
} for $s=0$, we have {\tiny
$$\begin{array}{rlll}
  &\nu(D_{m,n})\\
 =&\frac{n\cdot m(m-1)(m-2)(3m-5)}{12}+\frac{(54-36-24)(\frac{m}{3})^{3}-(54m-81-18m-12m+45)(\frac{m}{3})^{2}+(18m^{2}-54m+27-18m+36+24m-39)\frac{m}{3}+6m^{2}-6m}{12}\\
 =&\frac{n\cdot m(m-1)(m-2)(3m-5)}{12}+\frac{28m^{3}-54m^{2}+18m}{108}\\
 \leq&\frac{n\cdot m(m-1)(m-2)(3m-5)}{12}+\frac{28m^{3}-54m^{2}+42m+16}{108}.
\end{array}$$
}for $s=2$, we have {\tiny
$$\begin{array}{rlll}
  &\nu(D_{m,n})\\
 =&\frac{n\cdot m(m-1)(m-2)(3m-5)}{12}+\frac{(54-36-24)(\frac{m-2}{3})^{3}-(54m-81-18m-12m+45)(\frac{m-2}{3})^{2}+(18m^{2}-54m+27-18m+36+24m-39)\frac{m-2}{3}+6m^{2}-6m}{12}\\
 =&\frac{n\cdot m(m-1)(m-2)(3m-5)}{12}+\frac{28m^{3}-54m^{2}+42m+16}{108}.
\end{array}$$
}
\end{proof}

\begin{lemma}\label{lemma: D45} For $m>$ odd $n\geq 5$, $\nu(D_{m,n})\leq\frac{n\cdot m(m-1)(m-2)(3m-5)}{12}+\frac{m^{3}}{4}.$
\end{lemma}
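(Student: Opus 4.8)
The plan is to deduce Lemma~\ref{lemma: D45} directly from the closed formula for $\nu(D_{m,n})$ established in Lemma~\ref{lemma: D44}, treating the two cases $s=1$ and $s\neq 1$ separately, exactly as Lemma~\ref{lemma: D44} does. For $m>$ odd $n\geq 5$ write $r=\lfloor m/n\rfloor$ and $s=m-nr$, so $0\le s\le n-1$. What must be shown is that the ``error term''
$$\Delta(m,n):=\nu(D_{m,n})-\frac{n\,m(m-1)(m-2)(3m-5)}{12}$$
satisfies $\Delta(m,n)\le m^{3}/4$. From Lemma~\ref{lemma: D44}, $\Delta(m,n)$ is an explicit polynomial in $m,n,r$ (with a harmless $\pm$ term coming from the parity factor $3n(1-(-1)^{s})$, which is at most $6n$ and can be absorbed). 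So the whole lemma is an elementary inequality between low-degree polynomials once $r$ is eliminated via $r\le m/n$.

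The key steps, in order, are as follows. First, in the case $s=1$: substitute $r=(m-1)/n$ into the $s=1$ branch of Lemma~\ref{lemma: D44}; the resulting expression for $\Delta(m,n)$ is a rational function of $m$ and $n$ of the form $\tfrac{1}{12}\bigl(2m^{3}-3m^{2}n+mn^{2}+\dots+(2mn+4m+13n+8)(m-1)^{2}/n^{2}\bigr)$. Bounding this for $n\ge5$: the dominant contribution is the $2m^{3}$ term together with $(2mn)(m-1)^2/n^2 = 2m(m-1)^2/n \le 2m(m-1)^2/5$, and one checks $\frac{1}{12}\bigl(2m^3+\tfrac{2}{5}m(m-1)^2+O(m^2)\bigr)\le \tfrac{m^3}{4}$ for all $m$ in range, since $\tfrac{1}{12}(2+\tfrac25)=\tfrac{1}{5}<\tfrac14$ and the lower-order terms are swamped by the slack $(\tfrac14-\tfrac15)m^3=\tfrac{m^3}{20}$. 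Second, in the case $s\neq1$: here $r\le m/n$ but $r$ appears with positive leading coefficient $(2n^3-4n^2-8n)r^3/12$, which is \emph{increasing} in $r$ for $n\ge5$ (as $2n^3-4n^2-8n=2n(n^2-2n-4)>0$ once $n\ge4$), so the bound is maximised at $r=m/n$; substitute $r=m/n$ and again collect terms. The $r^3$ term contributes $\tfrac{1}{12}(2n^3-4n^2-8n)(m/n)^3=\tfrac{1}{12}(2-\tfrac4n-\tfrac8{n^2})m^3\le\tfrac{m^3}{6}$, the $r^2$ and $r^1$ terms contribute $O(m^2)$ after the substitution (their $m^3$-coefficients cancel against pieces of the $6m^2n$-type terms — this cancellation is the point to verify carefully), and $6m^2-6m=O(m^2)$; so $\Delta(m,n)\le\tfrac{m^3}{6}+O(m^2)\le\tfrac{m^3}{4}$ for $m$ large, with the finitely many small $m$ (those with $m\le $ some explicit bound depending on $n$, but bounded since $m>n$ forces $r\ge1$) checked directly or absorbed into the $\tfrac14-\tfrac16=\tfrac1{12}$ slack.

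Finally I would unify the two cases and note the parity/floor-function bookkeeping: replace $\lfloor m/n\rfloor$ by $m/n$ in every monomial where its coefficient is nonnegative (legitimate since each such replacement only increases $\nu(D_{m,n})$ given $\lfloor m/n\rfloor\le m/n$), drop the parity term using $0\le 3n(1-(-1)^{s})\le 6n\le 6m$, and then reduce to checking a single polynomial inequality $P(m,n)\le 0$ with $P$ of degree $3$ in $m$ and negative leading coefficient in $m$ for each fixed $n\ge5$; this certifies the claim for all $m$ above an explicit threshold, and a short finite check (or a cruder bound) disposes of the rest.

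The main obstacle I expect is not conceptual but bookkeeping: confirming that after substituting $r=m/n$ (resp.\ $r=(m-1)/n$) the coefficient of $m^3$ in $\Delta(m,n)$ is genuinely $\le 3$ (so that $\Delta\le m^3/4$ with room to spare) requires tracking the cancellations among the $r^3$, $r^2$, $r^1$ and constant blocks of Lemma~\ref{lemma: D44} across the different parity subcases of $s$, and making sure the $n$-dependence of those coefficients stays bounded — i.e., that the ``$/n$'' and ``$/n^2$'' factors produced by the substitution never conspire to blow up for the smallest admissible $n=5$. Once that single coefficient bound is pinned down, everything else is dominated-term estimation.
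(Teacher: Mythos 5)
Your overall plan coincides with the paper's: both cases are read off from Lemma~\ref{lemma: D44}, the parity term $3n(1-(-1)^{m-n\lfloor m/n\rfloor})$ is bounded by $6n$, and the problem is reduced to a polynomial inequality. Your $s=1$ branch is essentially the paper's argument and is sound, since there $\lfloor m/n\rfloor=(m-1)/n$ is an exact identity and the coefficient count stays below $\tfrac14$ (note only that the $4m(m-1)^2/n^2$ piece also contributes at order $m^3$, with coefficient $4/n^2\le 4/25$; it still fits).

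The gap is in the $s\neq 1$ branch, in how you eliminate $r=\lfloor m/n\rfloor$. You argue that because the $r^3$ coefficient $(2n^3-4n^2-8n)/12$ is positive, ``the bound is maximised at $r=m/n$,'' and you then substitute $r=m/n$ throughout. That inference is invalid: the $r^2$ coefficient is $-(6mn^2-3n^3-6mn-12m+15n)/12$, which is large and negative, so the cubic in $r$ is not monotone on the relevant range; writing it as $Ar^3-Br^2+Cr+D$, the leading parts of $3Ar^2-2Br+C$ near $r=m/n$ are $\tfrac{6nm^2}{12}-\tfrac{12nm^2}{12}+\tfrac{6nm^2}{12}$, which cancel, so the sign of the derivative is decided by lower-order terms and is not obviously positive. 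Substituting $m/n\ge r$ into a negative-coefficient monomial yields a lower bound, not an upper bound. Your fallback (replace $\lfloor m/n\rfloor$ by $m/n$ only in nonnegative-coefficient monomials) does not repair this: the $-Br^2$ block cannot be dropped or under-counted, because the $\mp m^3/2$ cancellation between the $r^2$ and $r$ blocks is exactly what brings the leading coefficient down from about $2/3$ to $1/6$; without it the estimate overshoots $m^3/4$. The paper avoids the issue entirely by using the identity $r=(m-s)/n$ (exact, since $s=m\bmod n$), substituting it into every block, and then bounding the resulting polynomial in $m,n,s$ via $s\le n-1$. Your route could be salvaged by estimating $|g(m/n)-g(r)|\le\sup|g'|\cdot(s/n)=O(m^2)$ and absorbing that error into the $m^3/12$ slack, but no such estimate appears in your write-up, so as it stands the $s\neq1$ case is not proved.
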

\begin{proof} By Lemma \ref{lemma: D44}, for $s=1$, we have
{\tiny
$$\begin{array}{rlll}
  &\nu(D_{m,n})\\
 =&\frac{n\cdot m(m-1)(m-2)(3m-5)+2m^{3}-3m^{2}\cdot n+m\cdot n^{2}+4m\cdot n-n^{2}-7m-n+5+(2m\cdot n+4m+13n+8)(\frac{m-1}{n})^{2}}{12}\hspace{400bp}\\
 =&\frac{n\cdot m(m-1)(m-2)(3m-5)+2m^{3}-3m^{2}\cdot n+m\cdot n^{2}+4m\cdot n-n^{2}-7m-n+5+(3m\cdot n-m(n-4)+3\times 5n-2(n-4))(\frac{m-1}{n})^{2}}{12}\\
 \leq &\frac{n\cdot m(m-1)(m-2)(3m-5)+2m^{3}-3m^{2}\cdot n+m\cdot n^{2}+4m\cdot n-n^{2}-7m-n+5+(m+3)(m-1)^{2}}{12}\\
 =&\frac{n\cdot m(m-1)(m-2)(3m-5)+3m^{3}-m\cdot n(m-n)-m^{2}(n-1)-m\cdot n(m-4)-n^{2}-12(m-1)-n-4}{12}\\
 \leq&\frac{n\cdot m(m-1)(m-2)(3m-5)}{12}+\frac{m^{3}}{4}\\
\end{array}$$
} for $s\neq 1$, we have {\tiny
$$\begin{array}{rlll}
  &\nu(D_{m,n})\\
 \leq&\frac{n\cdot m(m-1)(m-2)(3m-5)}{12}+\frac{(2n^{3}-4n^{2}-8n)\lfloor\frac{m}{n}\rfloor^{3}-(6m\cdot n^{2}-3n^{3}-6m\cdot n-12m+15n)\lfloor\frac{m}{n}\rfloor^{2}}{12}\hspace{400bp}\\
  &+\frac{(6m^{2}n-6m\cdot n^{2}+n^{3}-6m\cdot n+4n^{2}+24m-7n)\lfloor\frac{m}{n}\rfloor+6m^{2}-6m}{12}\\
 =&\frac{n\cdot m(m-1)(m-2)(3m-5)}{12}+\frac{(2n^{3}-4n^{2}-8n)(\frac{m-s}{n})^{3}-(6m\cdot n^{2}-3n^{3}-6m\cdot n-12m+15n)(\frac{m-s}{n})^{2}}{12}\hspace{400bp}\\
  &+\frac{(6m^{2}n-6m\cdot n^{2}+n^{3}-6m\cdot n+4n^{2}+24m-7n)\frac{m-s}{n}+6m^{2}-6m}{12}\\
 =&\frac{n\cdot m(m-1)(m-2)(3m-5)}{12}+\frac{2m^{3}-3m^{2}\cdot n+m\cdot n^{2}+4m\cdot n-13m-2s^{3}+3n\cdot s^{2}-n0^{2}\cdot s+6m\cdot s-4n\cdot s+7s}{12}\\
  &+\frac{2m^{3}\cdot n+4m^{3}+9m^{2}\cdot n+6m\cdot n\cdot s-6m\cdot n\cdot s^{2}-12m\cdot s^{2}s-15n\cdot s^{2}+8s^{3}+4n\cdot s^{3}}{12n^{2}}\\
 =&\frac{n\cdot m(m-1)(m-2)(3m-5)}{12}+\frac{2m^{3}-3m^{2}\cdot n+m\cdot n^{2}+4m\cdot n-13m-2s^{3}+3n\cdot s^{2}-n0^{2}\cdot s+6m\cdot s-4n\cdot s+7s}{12}\\
  &+\frac{5m^{3}\cdot n+6m\cdot n\cdot s-2m^{2}\cdot n(m-5)-m^{3}(n-4)-m^{2}\cdot n-2m\cdot ns^{2}-4m\cdot s^{2}-15n\cdot s^{2}-8(m-s)s^{2}-4n(m-s)s^{2})}{12n^{2}}\\
 =&\frac{n\cdot m(m-1)(m-2)(3m-5)}{12}+\frac{2m^{3}-3m^{2}\cdot n+m\cdot n^{2}+4m\cdot n-13m-2s^{3}+3n\cdot s^{2}-n0^{2}\cdot s+6m\cdot s-4n\cdot s+7s+m^{3}+6m}{12}\\
\leq&\frac{n\cdot m(m-1)(m-2)(3m-5)}{12}+\frac{3m^{3}-3m^{2}\cdot n+m\cdot n^{2}+4m\cdot n-7m-2(n-1)^{3}+3n\cdot (n-1)^{2}-n0^{2}(n-1)+6m(n-1)-4n(n-1)+7(n-1)}{12}\\
 =&\frac{n\cdot m(m-1)(m-2)(3m-5)}{12}+\frac{3m^{3}-3m^{2}\cdot n+m\cdot n^{2}+10m\cdot n-3n^{2}-13m+8n-5}{12}\\
 =&\frac{n\cdot m(m-1)(m-2)(3m-5)}{12}+\frac{3m^{3}-m\cdot n(m-n)-2m\cdot n(m-5)-13m-n^{2}-2n(n-4)-5}{12}\\
 \leq&\frac{n\cdot m(m-1)(m-2)(3m-5)}{12}+\frac{m^{3}}{4}\\
\end{array}$$
}
\end{proof}

By Lemmas \ref{lemma: D4}, \ref{lemma: D55}, \ref{lemma: D43} and
\ref{lemma: D45}, we have
\begin{theorem}\label{theorem: kmc4}
For $n\geq 3$,
$$\begin{array}{llll}
cr(K_m\times C_n)&\leq\left \{ \begin{array}{llll}
\frac{n\cdot m(m-1)(m-2)(3m-5)}{12}& \mbox{ for } m\geq 4 \mbox{ and even } n\geq 4,\\
\frac{n\cdot m(m-1)(m-2)(3m-5)}{12}+\frac{m(m-1)}{2}& \mbox{ for } 4\leq m\leq \mbox{ odd } n,\\
\frac{n\cdot m(m-1)(m-2)(3m-5)}{12}+\frac{28m^{3}-54m^{2}+42m+16}{108}& \mbox{ for } m>n=3,\\
\frac{n\cdot m(m-1)(m-2)(3m-5)}{12}+\frac{m^{3}}{4}& \mbox{ for } m> \mbox{ odd } n\geq 5.
       \end{array}
     \right .
\end{array}$$
\end{theorem}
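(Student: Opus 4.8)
The plan is to invoke the elementary inequality $cr(G)\le\nu(D)$ valid for any drawing $D$, so that it suffices to exhibit, for each pair $(m,n)$ in the four regimes, a single drawing of $K_m\times C_n$ whose crossing count meets the stated bound. I would use the cylindrical drawings $D_{m,n}$ already set up in Section~3: the $m$ copies of $C_n$ are wound as equidistant helical level curves between the top and bottom boundary disks, and across each block $E^j$ the vertices of $V^j$ are joined to those of $V^{j+1}$ by shortest helices, the whole configuration of $E^j$ being encoded by a permutation $f^j$ of $\{0,\dots,m-1\}$ recording how the cyclic labels at level $j-1$ map to those at level $j$. Because consecutive blocks share no edges and occupy disjoint annular regions of the cylinder, $\nu_{D_{m,n}}(E^{j_1},E^{j_2})=0$ for $j_1\ne j_2$, so Lemma~\ref{lemma: E12} gives $\nu(D_{m,n})=\sum_{j=0}^{n-1}\nu_{D_{m,n}}(E^j)$, and each summand is read off from $f^j$ by Lemma~\ref{lemma: D0} as $\binom{m}{2}^2-\sum_t\big((m-1-t)f_l(t)+t(m-1-f_l(t))\big)+inv(f_l)$.

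With this reduction the four cases of the theorem are precisely Lemmas~\ref{lemma: D4}, \ref{lemma: D55}, \ref{lemma: D43}, and~\ref{lemma: D45}. For $m\ge 4$ and even $n$ one takes $f^j=f_1$ (the reversal) for every $j$: each block contributes $\tfrac{m(m-1)(m-2)(3m-5)}{12}$ by Lemma~\ref{lemma: D1}, giving the clean bound of case~1. For $4\le m\le$ odd $n$ one cannot use $f_1$ throughout, since an odd number of reversals does not close up around the cycle; instead one inserts the perturbed permutations $f_2$ (and $f_3$ for even $m$) on the first $m$ blocks and $f_1$ afterwards, each perturbed block costing only $O(m)$ extra crossings, which yields case~2 through Lemmas~\ref{lemma: D21}--\ref{lemma: D3}. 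For $m>$ odd $n$ the number of levels exceeds the number of blocks, forcing a more drastic reshuffling; here one uses the families $f_4,\dots,f_8$, whose contributions are assembled from the partial sums $F_{l,1},F_{l,2},F_{l,3}$ over the index sets $S_1,S_2,S_3$. Substituting the closed forms of Lemmas~\ref{lemma: D41}--\ref{lemma: D433} into Lemma~\ref{lemma: D2} and summing over $j$ gives the exact value of $\nu(D_{m,n})$ in Lemma~\ref{lemma: D44}, which is then bounded above by $\tfrac{n\,m(m-1)(m-2)(3m-5)}{12}+\tfrac{28m^3-54m^2+42m+16}{108}$ when $n=3$ (Lemma~\ref{lemma: D43}) and by $\tfrac{n\,m(m-1)(m-2)(3m-5)}{12}+\tfrac{m^3}{4}$ when $n\ge 5$ is odd (Lemma~\ref{lemma: D45}). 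Collecting the four cases finishes the proof.

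The real difficulty lies in the lemmas feeding the final assembly rather than the assembly itself. One must check that each $f_l$ really is a permutation of $\{0,\dots,m-1\}$ — the definitions of $f_4$ through $f_8$ are intricate, with many subcases depending on the parities of $s=m\bmod n$ and of $s_0=(n-1)/2$, $s_1=s/2$ — that the chosen blocks glue consistently around the cycle (the composition $f^{n-1}\circ\cdots\circ f^0$ must be the identity so the drawing closes up), and that the counting formula of Lemma~\ref{lemma: D0} genuinely enumerates crossings among shortest helices. Once those structural facts are in place, the remaining arithmetic — rewriting $\lfloor m/n\rfloor$ and $m\bmod n$ to reach the displayed polynomials — is routine but long; I would organize it by the parity of $s$ exactly as in Lemma~\ref{lemma: D44}, and then verify the two concluding inequalities by observing that the correction terms are monotone in $r=\lfloor m/n\rfloor$ and are extremal at the boundary residues $s\in\{0,1,2\}$ for $n=3$ and $s=n-1$ for odd $n\ge 5$.
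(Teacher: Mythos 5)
Your proposal is correct and follows essentially the same route as the paper: the theorem is obtained by combining the inequality $cr(G)\le\nu(D)$ with the crossing counts of the cylindrical drawings $D_{m,n}$ established in Lemmas \ref{lemma: D4}, \ref{lemma: D55}, \ref{lemma: D43} and \ref{lemma: D45}, exactly as you describe. Your additional remarks on what must be verified in the supporting lemmas (that each $f_l$ is a permutation, that the blocks glue around the cycle, and that Lemma \ref{lemma: D0} counts crossings correctly) accurately locate where the real work of the argument lies.
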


\section{Lower bounds of  $cr(K_{m}\times P_n)$ and $cr(K_{m}\times C_n)$}
\indent \indent We shall introduce the lower bound method proposed by Leighton \cite{L84}. Let $G_1=(V_1,E_1)$ and $G_2=(V_2,E_2)$ be graphs. An embedding of $G_1$ in $G_2$ is a couple of
mapping $(\varphi,\kappa)$ satisfying
$$\varphi: V_1\rightarrow V_2 \mbox{ is an injection}$$
$$\kappa: E_1\rightarrow \{\mbox{set of all paths in $G_2$}\},$$ such that
if $uv\in E_1$ then $\kappa(uv)$ is a path between $\varphi(u)$ and $\varphi(v)$. For any $e\in E_2$ define
$$cg_e(\varphi,\kappa)=|\{f\in E_1:e\in \kappa(f)\}|$$
and
$$cg(\varphi,\kappa)=\max\limits_{e\in E_2}\{cg_e(\varphi,\kappa)\}.$$
The value $cg(\varphi,\kappa)$ is called congestion.

\begin{lemma}\label{Lemma congestion} \cite{L84} Let $(\varphi,\kappa)$ be an embedding of $G_1$ in
$G_2$ with congestion $cg(\varphi,\kappa)$. Let $\Delta(G_2)$ denote the maximal degree of $G_2$. Then
$$cr(G_2)\geq \frac{cr(G_1)}{cg^2(\varphi,\kappa)}-\frac{|V_2|}{2}\Delta^2(G_2).$$
\end{lemma}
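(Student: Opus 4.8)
The plan is to establish this by the standard ``drawing transfer'' (or \emph{embedding}) argument: from an optimal drawing of $G_2$ we build a drawing of $G_1$ whose crossings are all charged either to crossings of $G_2$ or to vertices of $G_2$, and then we count.

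First I would fix a drawing $D_2$ of $G_2$ with exactly $cr(G_2)$ crossings and use $(\varphi,\kappa)$ to produce a drawing $D_1$ of $G_1$ as follows. Place each vertex $u\in V_1$ at the point of the plane occupied by $\varphi(u)$ in $D_2$ (legitimate since $\varphi$ is an injection), and draw each edge $f=uv\in E_1$ as a curve closely following the path $\kappa(f)$ from $\varphi(u)$ to $\varphi(v)$. Concretely, replace each edge $e$ of $G_2$ by a thin band and route the at most $cg_e(\varphi,\kappa)\le cg(\varphi,\kappa)$ curves of $G_1$ that traverse $e$ as parallel strands inside that band; inside a small disk around each vertex $w$ of $G_2$, join the incoming strand-ends according to how the corresponding paths behave at $w$ (two strand-ends are joined when $\kappa(f)$ passes through $w$, and a strand terminates at $\varphi(u)$ when $\kappa(f)$ ends at $w=\varphi(u)$). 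If some $\kappa(f)$ fails to be a simple path it may first be shortcut, and self-intersections of a single strand removed, so each strand may be taken simple; in the interior of a band the strands run in parallel and create no crossings.

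Next I would bound $\nu(D_1)$ by sorting the crossings of $D_1$ by where they lie in $D_2$. A crossing between two strands occurs either (i) at a crossing point $p$ of $D_2$, between edges $e$ and $e'$ of $G_2$, in which case the number of crossing pairs located at $p$ is at most $cg_e(\varphi,\kappa)\cdot cg_{e'}(\varphi,\kappa)\le cg^2(\varphi,\kappa)$; or (ii) inside the disk around some vertex $w$ of $G_2$, where the number of strand-ends on the disk boundary is at most $\sum_{e\ni w}cg_e(\varphi,\kappa)\le \deg_{G_2}(w)\,cg(\varphi,\kappa)\le \Delta(G_2)\,cg(\varphi,\kappa)$, and any collection of at most $k$ curves in a disk with prescribed boundary endpoints can be realized with at most $\binom{k}{2}\le k^2/2$ crossings. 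Summing over all crossing points of $D_2$ and all vertices of $G_2$ gives
\[
cr(G_1)\ \le\ \nu(D_1)\ \le\ cg^2(\varphi,\kappa)\,cr(G_2)\ +\ |V_2|\cdot\frac{\Delta^2(G_2)\,cg^2(\varphi,\kappa)}{2}\ =\ cg^2(\varphi,\kappa)\Big(cr(G_2)+\tfrac{|V_2|}{2}\Delta^2(G_2)\Big),
\]
and dividing by $cg^2(\varphi,\kappa)$ and rearranging yields the stated inequality.

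The construction of $D_1$ and the bookkeeping in step (i) are routine; the point that needs care is step (ii) — one must check that the local rewiring inside each vertex-disk is realizable in the plane with at most $\binom{\Delta(G_2)\,cg(\varphi,\kappa)}{2}$ crossings no matter in which cyclic order the strands arrive along the incident bands, and that these local pieces together with the bands glue into a bona fide drawing of $G_1$. This is the main (albeit standard) obstacle; the remainder is arithmetic.
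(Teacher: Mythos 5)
The paper offers no proof of this lemma at all: it is quoted verbatim from Leighton \cite{L84}, so there is nothing internal to compare against. Your reconstruction is the standard congestion argument and is correct: transferring an optimal drawing $D_2$ of $G_2$ to a drawing of $G_1$ along the paths $\kappa(f)$, charging at most $cg_e(\varphi,\kappa)\cdot cg_{e'}(\varphi,\kappa)\le cg^2(\varphi,\kappa)$ strand crossings to each crossing of $D_2$, and charging at most $\binom{k}{2}\le\tfrac12\Delta^2(G_2)\,cg^2(\varphi,\kappa)$ crossings to the rewiring disk at each vertex $w$ (where $k\le\deg_{G_2}(w)\,cg(\varphi,\kappa)$ counts strand-ends, and the local realizability you flag is unproblematic because any family of arcs in a disk with prescribed distinct endpoints --- drawn, say, as straight segments, perturbed to avoid triple points --- pairwise cross at most once) yields $cr(G_1)\le cg^2(\varphi,\kappa)\bigl(cr(G_2)+\tfrac{|V_2|}{2}\Delta^2(G_2)\bigr)$, which is exactly the stated inequality after dividing by $cg^2(\varphi,\kappa)$.
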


Let $K^x_{m,n}$ be the complete bipartite multigraph of $m+n$
vertices, in which every two vertices are joined by $x$ parallel
edges.

According to De Klerk \cite{D07} and Kainen \cite{K72}, the
following lemmas hold.

\begin{lemma}\label{Lemma crossing of Kmn}\cite{D07} $cr(K_{m,n})\geq
0.8594\lfloor\frac{m}{2}\rfloor\lfloor\frac{m-1}{2}\rfloor \lfloor\frac{n}{2}\rfloor\lfloor\frac{n-1}{2}\rfloor $.
\end{lemma}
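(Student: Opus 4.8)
The plan is to reconstruct the amalgamation (counting) argument behind \cite{D07}, which derives a lower bound for an arbitrary $K_{m,n}$ from the exact, or best known, crossing numbers of a handful of small complete bipartite graphs: in particular from Kleitman's values $cr(K_{5,n})=4\lfloor n/2\rfloor\lfloor(n-1)/2\rfloor$ and $cr(K_{6,n})=6\lfloor n/2\rfloor\lfloor(n-1)/2\rfloor$, and, crucially, from a stronger small-case lower bound (for $cr(K_{9,n})$, obtained by semidefinite programming in \cite{D07}) which is what raises the multiplicative constant to $0.8594$.

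The core is the counting step. Fix an optimal drawing $D$ of $K_{m,n}$ with parts $A$, $|A|=m$, and $B$, $|B|=n$. For every $k$-element subset $S\subseteq A$, the restriction of $D$ to $S\cup B$ is a drawing of $K_{k,n}$, hence contains at least $cr(K_{k,n})$ crossings. Every crossing of $D$ is produced by two edges having exactly two endpoints in $A$, so it survives in exactly $\binom{m-2}{k-2}$ of these sub-drawings; summing the inequality over all $\binom{m}{k}$ subsets $S$ gives
$$\binom{m-2}{k-2}\,cr(K_{m,n})\ \ge\ \sum_{S}cr\big(D|_{S\cup B}\big)\ \ge\ \binom{m}{k}\,cr(K_{k,n}),$$
that is, $cr(K_{m,n})\ge \tfrac{m(m-1)}{k(k-1)}\,cr(K_{k,n})$. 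Running the same argument on the $B$-side turns $cr(K_{k,n})$ into $\tfrac{n(n-1)}{l(l-1)}\,cr(K_{k,l})$, so altogether $cr(K_{m,n})\ge \tfrac{m(m-1)}{k(k-1)}\cdot\tfrac{n(n-1)}{l(l-1)}\,cr(K_{k,l})$.

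To reach the stated form I would use the elementary bound $\lfloor m/2\rfloor\lfloor(m-1)/2\rfloor\le m(m-1)/4$, valid for every $m$, whence $\tfrac{m(m-1)}{k(k-1)}\ge \tfrac{4}{k(k-1)}\lfloor m/2\rfloor\lfloor(m-1)/2\rfloor$; substituting the base value $cr(K_{k,n})$ (or its asymptotic rate, the $o(n^2)$ slack being absorbed by a Richter--Thomassen-type limiting argument showing that $cr(K_{k,n})/(\lfloor n/2\rfloor\lfloor(n-1)/2\rfloor)$ tends to a limit) gives $cr(K_{m,n})\ge c\,\lfloor m/2\rfloor\lfloor(m-1)/2\rfloor\lfloor n/2\rfloor\lfloor(n-1)/2\rfloor$, where $c$ equals four times the limiting rate of $cr(K_{k,\cdot})$ divided by $k(k-1)$. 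With $k=5$ this already yields $c=0.8$, and the sharper base case of \cite{D07} pushes $c$ up to $0.8594$; the finitely many pairs $(m,n)$ with $m$ or $n$ below the base sizes are handled directly.

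The genuinely hard part is not the counting bookkeeping but securing a base-case bound strong enough to deliver the constant $0.8594$: that is precisely the computer-assisted semidefinite-programming lower bound on $cr(K_{9,n})$ established in \cite{D07}, which resists any elementary derivation. A secondary technicality is fitting the floor functions and the $o(n^2)$ error terms together so that the clean inequality holds for all $m,n$, not merely asymptotically.
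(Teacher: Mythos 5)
This lemma is not proved in the paper at all: it is imported verbatim from \cite{D07}, so there is no ``paper's own proof'' to compare against. Your reconstruction of the argument behind the citation is essentially faithful and the combinatorial part is correct: the averaging step over $k$-subsets $S$ of one side, with each crossing of a good drawing surviving in exactly $\binom{m-2}{k-2}$ induced sub-drawings, does give $cr(K_{m,n})\ge\frac{m(m-1)}{k(k-1)}cr(K_{k,n})$, the inequality $m(m-1)\ge 4\lfloor m/2\rfloor\lfloor(m-1)/2\rfloor$ converts this to the floor-function form, and the Richter--Thomassen-type limit handles the lower-order terms. You also correctly locate the source of the constant: the counting argument from Kleitman's exact values of $cr(K_{5,n})$ and $cr(K_{6,n})$ only reaches $0.8$, and the improvement to $0.8594$ comes entirely from the semidefinite-programming lower bound on the limiting ratio for $K_{9,n}$ established in \cite{D07}, which you rightly say cannot be rederived by hand. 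So your proposal is not a self-contained proof --- it reduces the lemma to the SDP computation --- but that is precisely the same epistemic status the paper adopts by citing \cite{D07}, and nothing in your outline is wrong. The one point worth flagging is that the clean inequality for \emph{all} $m,n$ (including the small cases $m\le 8$, where the $k=5,6$ counting alone gives only the constant $0.8$) needs the separate observation that Zarankiewicz's conjecture is verified for $\min(m,n)\le 6$, plus a direct treatment of $m=7,8$; you gesture at this (``the finitely many pairs \dots are handled directly'') but it is the only place where your sketch would need real additional work to close.
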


\begin{lemma}\label{Lemma crossing of xKmn}\cite{K72}
$cr(K^x_{m,n})=x^2cr(K_{m,n})$.
\end{lemma}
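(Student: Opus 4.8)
The plan is to establish the two inequalities $cr(K^x_{m,n}) \le x^2\, cr(K_{m,n})$ and $cr(K^x_{m,n}) \ge x^2\, cr(K_{m,n})$ separately and then combine them.

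For the upper bound I would start from an optimal drawing $D_0$ of $K_{m,n}$ with exactly $cr(K_{m,n})$ crossings, which we may assume is a good drawing (no two adjacent edges cross, no two edges cross more than once). I would then \emph{fatten} each edge $e$ of $K_{m,n}$ into a thin band following $e$ and replace $e$ by $x$ mutually non-crossing parallel curves drawn inside that band, all sharing the two endpoints of $e$; this is possible because the $x$ copies of a single edge form a planar multigraph on two vertices, so they can be nested without self-intersection. Two bands replacing a crossing pair of edges of $K_{m,n}$ then meet in exactly $x^2$ points, while two bands replacing a non-crossing (in particular, an adjacent) pair of edges can be kept disjoint, and distant bands do not interfere. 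Hence the resulting drawing of $K^x_{m,n}$ has exactly $x^2\, cr(K_{m,n})$ crossings, giving $cr(K^x_{m,n}) \le x^2\, cr(K_{m,n})$.

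For the lower bound, let $D$ be an optimal drawing of $K^x_{m,n}$, and for each edge $\{u,v\}$ of the underlying $K_{m,n}$ let $C_{uv}$ be the class of $x$ parallel edges of $K^x_{m,n}$ joining $u$ and $v$ in $D$. Choosing one representative edge from each class $C_{uv}$ independently and uniformly at random yields a random sub-drawing $D'$ of $K_{m,n}$. Classify each crossing of $D$ by the unordered pair of classes containing its two edges, and call a crossing \emph{proper} if these two classes are distinct; write $\nu_{\mathrm{pr}}(D)$ for the number of proper crossings, so that $\nu(D) \ge \nu_{\mathrm{pr}}(D)$. A proper crossing between $e \in C_{uv}$ and $f \in C_{u'v'}$ survives in $D'$ exactly when both $e$ and $f$ are selected, which has probability $1/x^2$ by independence of the choices for the two distinct classes; a crossing between two edges of one and the same class never survives. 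Hence $\mathbb{E}[\nu(D')] = \nu_{\mathrm{pr}}(D)/x^2$. Since $D'$ is in every instance a drawing of $K_{m,n}$ (legitimate even if not a good drawing, which is all we need), we have $\nu(D') \ge cr(K_{m,n})$ always, so $\nu_{\mathrm{pr}}(D)/x^2 \ge cr(K_{m,n})$, whence $\nu(D) \ge \nu_{\mathrm{pr}}(D) \ge x^2\, cr(K_{m,n})$. Taking $D$ optimal gives $cr(K^x_{m,n}) \ge x^2\, cr(K_{m,n})$, and combining with the upper bound yields $cr(K^x_{m,n}) = x^2\, cr(K_{m,n})$.

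The only delicate points are bookkeeping ones: in the upper bound, verifying that the $x$ parallel copies of a single edge can be drawn inside a thin band with no mutual crossings and that the fattening introduces no spurious crossings with unrelated bands; in the lower bound, making sure that crossings between two edges of the same class are simply discarded (not double-counted) and that every realization of $D'$ is indeed a drawing of $K_{m,n}$, so that $\nu(D') \ge cr(K_{m,n})$ holds pointwise and therefore in expectation. I expect the averaging step in the lower bound to be the conceptual crux, but it becomes routine once the classification of crossings into proper and within-class ones is in place.
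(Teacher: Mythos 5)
Your argument is correct. Note, however, that the paper does not prove this lemma at all: it is quoted from Kainen's 1972 paper \cite{K72}, so there is no in-paper proof to compare against. Your two-sided argument --- fattening each edge of an optimal good drawing of $K_{m,n}$ into a band of $x$ nested parallel copies for the upper bound, and averaging over random choices of one representative per parallel class (discarding within-class crossings) for the lower bound --- is the standard counting proof of this fact and is essentially Kainen's; both directions are sound, including the bookkeeping points you flag (non-crossing nesting of parallel copies near shared endpoints, and the pointwise bound $\nu(D')\geq cr(K_{m,n})$ passing to the expectation).
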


Now we are in a position to show the lower bound of $cr(K_{m,m}-mK_2)$ and $cr(K_{m,2m}-mK_{1,2}$.

\begin{theorem}\label{Theorem Lower Bound for Kmm}
$cr(K_{m,m}-mK_2)\geq \frac{0.8594}{(1+\frac{3}{m-1})^2}\lfloor\frac{m}{2}\rfloor ^2\lfloor\frac{m-1}{2}\rfloor ^2-m(m-1)^2$.
\end{theorem}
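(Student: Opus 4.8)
The strategy is to apply Leighton's bound (Lemma~\ref{Lemma congestion}) with $G_1 = K^{x}_{m,m}$ a suitable complete bipartite multigraph and $G_2 = K_{m,m}-mK_2$, so that a cheap embedding of $G_1$ into $G_2$ transfers the known lower bound for $cr(K^{x}_{m,m})$ (Lemmas~\ref{Lemma crossing of Kmn} and \ref{Lemma crossing of xKmn}) down to $K_{m,m}-mK_2$. Write $H = K_{m,m}-mK_2$, with bipartition classes $A=\{a_0,\dots,a_{m-1}\}$, $B=\{b_0,\dots,b_{m-1}\}$ and missing edges $a_ib_i$. The natural choice is $x=1$: embed $K_{m,m}$ into $H$ by mapping each vertex to itself, routing each present edge $a_ib_j$ ($i\neq j$) along the single edge $a_ib_j$ of $H$, and routing each missing edge $a_ib_i$ along a short detour path inside $H$ — for instance $a_i\,b_j\,a_k\,b_i$ for some fixed choice of intermediate vertices $j,k\neq i$ (a path of length $3$, which exists since $m\ge 4$).

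The key quantitative step is to bound the congestion $cg(\varphi,\kappa)$ of this embedding. Each edge $e=a_ub_v$ of $H$ carries its own image (congestion contribution $1$) plus a contribution from every detour path $\kappa(a_ib_i)$ that uses $e$. One must therefore choose the detours $a_i\,b_{\sigma(i)}\,a_{\tau(i)}\,b_i$ so that no edge of $H$ is overused; a symmetric cyclic choice such as $\sigma(i)=i+1$, $\tau(i)=i+1$ (indices mod $m$), giving the path $a_i\, b_{i+1}\, a_{i+1}\, b_i$, spreads the load evenly. With such a choice each edge of $H$ lies on at most a bounded number of detour paths — a constant independent of $m$ — but to match the stated bound $\bigl(1+\tfrac{3}{m-1}\bigr)^2$ in the denominator one wants $cg(\varphi,\kappa)$ of the form $1+\tfrac{3}{m-1}$; this suggests instead routing via \emph{many} parallel detours: take $x=m-1$ parallel copies of each edge of $K_{m,m}$, route the $m-1$ copies of a present edge along the $m-1$\,(-ish) internally disjoint $a_i$--$b_j$ paths available in $H$, and similarly distribute the copies of each missing edge, so that the total number of $G_1$-edges through any fixed edge of $H$ is about $x\cdot\frac{|E(K_{m,m})|}{|E(H)|}\cdot(\text{path length})$, which works out to $x(1+\tfrac{3}{m-1})=x+3$. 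Then Lemma~\ref{Lemma congestion} gives
$$cr(H)\ \ge\ \frac{cr(K^{x}_{m,m})}{(x+3)^2}-\frac{|V(H)|}{2}\Delta^2(H)\ =\ \frac{x^2\,cr(K_{m,m})}{(x+3)^2}-\frac{2m}{2}(m-1)^2,$$
and substituting $x=m-1$ together with Lemma~\ref{Lemma crossing of Kmn} yields exactly
$$cr(K_{m,m}-mK_2)\ \ge\ \frac{0.8594}{(1+\frac{3}{m-1})^2}\Bigl\lfloor\tfrac{m}{2}\Bigr\rfloor^{2}\Bigl\lfloor\tfrac{m-1}{2}\Bigr\rfloor^{2}-m(m-1)^2,$$
since $\Delta(H)=m-1$ and $|V(H)|=2m$, and $\lfloor m/2\rfloor\lfloor(m-1)/2\rfloor\lfloor m/2\rfloor\lfloor(m-1)/2\rfloor = \lfloor m/2\rfloor^2\lfloor(m-1)/2\rfloor^2$.

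The main obstacle is the congestion analysis: one has to exhibit an explicit routing scheme for the $x$ parallel copies of every edge of $K_{m,m}$ through $H$ and verify, by a careful counting (or averaging) argument, that every edge of $H$ carries at most $x+3$ of them — equivalently, that the detours needed to realize the $m$ missing edges can be balanced across the $m(m-1)$ edges of $H$ so as to add at most $3$ extra units of load per edge on top of the $x$ units coming from the present edges. Everything else — evaluating $cr(K^{x}_{m,m})=x^2cr(K_{m,m})$, reading off $\Delta(H)$ and $|V(H)|$, and simplifying the floor products — is routine. A minor point to check is that the path length $3$ used for the detours is genuinely available, i.e.\ that $K_{m,m}-mK_2$ is connected with diameter at most $3$ for $m\ge 4$, which is immediate. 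Once the congestion bound $cg(\varphi,\kappa)\le x+3 = m+2$ is established, the inequality follows by direct substitution.
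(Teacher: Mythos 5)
You have the right framework --- Leighton's congestion lemma applied to an embedding of a complete bipartite multigraph $K^x_{m,m}$ into $K_{m,m}-mK_2$, with the parameters tuned so that $cg(\varphi,\kappa)/x=1+\frac{3}{m-1}$ --- and the peripheral bookkeeping ($\Delta(G_2)=m-1$, $|V_2|=2m$, the use of Lemmas \ref{Lemma crossing of Kmn} and \ref{Lemma crossing of xKmn}) is correct. But the one step that carries the real content, namely exhibiting a routing whose \emph{maximum} edge load is $x+3$, is exactly the step you leave open. Your computation $x\cdot\frac{|E(K_{m,m})|}{|E(H)|}\cdot(\text{path length})$ only shows that the \emph{average} load over the $m(m-1)$ edges of $H$ is $x+3$; congestion is a maximum, so with $x=m-1$ you would still have to produce, for each missing edge $a_ib_i$, a set of $m-1$ detours $a_i\,b_{\alpha}\,a_{\beta}\,b_i$ whose first, middle and last edges collectively hit every edge of $H$ at most once --- in effect a system of fixed-point-free bijections $\sigma_i$ on $\{0,\dots,m-1\}\setminus\{i\}$ whose graphs exactly partition the ordered pairs $(u,v)$ with $u\neq v$. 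That is a nontrivial combinatorial construction which you do not supply. Moreover, the one concrete detour you do write down, $a_i\,b_{i+1}\,a_{i+1}\,b_i$, is not even a path in $K_{m,m}-mK_2$: its middle edge $a_{i+1}b_{i+1}$ is one of the deleted edges. Any admissible detour must have the form $a_i\,b_{\alpha}\,a_{\beta}\,b_i$ with $\alpha\neq\beta$ and $\alpha,\beta\neq i$.

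The paper sidesteps the balancing problem entirely by taking the multiplicity large enough that \emph{all} admissible detours are used: it sets $x=(m-1)(m-2)$ and routes the $k$-th copy of the missing edge $u_iv_i$ along $a_i\,b_{\alpha^k_i}\,a_{\beta^k_i}\,b_i$, where $(\alpha^k_i,\beta^k_i)$ runs over all $(m-1)(m-2)$ ordered pairs of distinct elements of $\{0,\dots,m-1\}\setminus\{i\}$, while all copies of a present edge go along the corresponding edge of $H$. By symmetry every edge $a_ub_v$ with $u\neq v$ is then the first edge of exactly $m-2$ detours, the middle edge of exactly $m-2$, and the last edge of exactly $m-2$, on top of the $(m-1)(m-2)$ parallel copies routed directly along it, so the congestion is exactly $(m-1)(m-2)+3(m-2)=(m-2)(m+2)$ and the ratio $\frac{x^2}{cg^2}=\frac{(m-1)^2}{(m+2)^2}=\bigl(1+\tfrac{3}{m-1}\bigr)^{-2}$ comes out the same as the one you were aiming for, with no design-theoretic argument needed. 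As written, your proof is incomplete at its central point; either adopt the paper's ``use all arrangements'' device or explicitly construct the balanced system of $m-1$ detours per missing vertex pair.
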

\begin{proof} By Lemmas \ref{Lemma congestion}-\ref{Lemma crossing of xKmn}, we only need to construct an embedding $(\varphi,\kappa)$ of $K^{(m-1)(m-2)}_{m,m}$ into $K_{m,m}-mK_2$ with congestion
$cg(\varphi,\kappa)=(m-2)(m+2)$.

Let $\alpha^k_{i}\beta^k_{i}$ be the $k$-th $(m-1,2)$-arrangement, where $\alpha^k_{i},\beta^k_{i}\in \{0,1,2,\cdots,m-1\}-\{i\}$ and $\alpha^k_{i}\neq \beta^k_{i}$. Let
$$\begin{array}{llll}
V(K^{(m-1)(m-2)}_{m,m})&=\{u_i,v_i\ |\ 0\leq i\leq m-1\},\\
E(K^{(m-1)(m-2)}_{m,m})&=\{(u_i,v_j)^k\ |\ 0\leq i,j\leq m-1, 1\leq k\leq (m-1)(m-2)\},\\
V(K_{m,m}-mK_2)&=\{a_i,b_i\ |\ 0\leq i\leq m-1\},\\
E(K_{m,m}-mK_2)&=\{(a_i,b_j)\ |\ 0\leq i\neq j\leq m-1\}.
\end{array}$$
Let $\varphi (u_i)=a_i$, $\varphi (v_i)=b_i$,
$\kappa((u_i,v_i)^k)=P_{a_ib_{\alpha^k_{i}}a_{\beta^k_{i}}b_i}$ for
$0\leq i\leq m-1$, and $\kappa((u_i,v_j)^k)=(a_i,b_j)$ for $0\leq
i\neq j\leq m-1$. Then $cg_e(\varphi,\kappa)=(m-2)(m+2)$ for every
$e\in E(K^{(m-1)(m-2)}_{m,m})$. This completes the proof of Theorem
\ref{Theorem Lower Bound for Kmm}.
\end{proof}

\begin{theorem}\label{Theorem Lower Bound for Km2m}
$cr(K_{m,2m}-mK_{1,2})\geq \frac{0.8594}{(1+\frac{3}{m-1})^2}m(m-1)\lfloor\frac{m}{2}\rfloor\lfloor\frac{m-1}{2}\rfloor-6m(m-1)^2$.
\end{theorem}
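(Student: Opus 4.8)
The plan is to mimic the proof of Theorem~\ref{Theorem Lower Bound for Kmm}: combine Leighton's congestion inequality (Lemma~\ref{Lemma congestion}) with the known crossing number of complete bipartite multigraphs (Lemmas~\ref{Lemma crossing of Kmn} and~\ref{Lemma crossing of xKmn}) by exhibiting an explicit embedding of a suitable multigraph into $K_{m,2m}-mK_{1,2}$. Concretely, I would embed $K^{x}_{m,2m}$ into $G_2:=K_{m,2m}-mK_{1,2}$ for an appropriate multiplicity $x$, with a controlled congestion $cg(\varphi,\kappa)$, and then plug the resulting bound $cr(G_2)\ge cr(K^{x}_{m,2m})/cg^2-\tfrac{|V(G_2)|}{2}\Delta^2(G_2)$ into the estimates. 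Here $|V(G_2)|=3m$ and $\Delta(G_2)$ is linear in $m$, so the additive correction term is $O(m^3)$, matching the shape $-6m(m-1)^2$ in the statement.

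First I would fix notation for $G_2$: write the bipartition as $\{a_0,\dots,a_{m-1}\}$ on one side and $\{b_0,\dots,b_{m-1},c_0,\dots,c_{m-1}\}$ on the other, where the removed $mK_{1,2}$ consists of the stars centered at $a_i$ with leaves $b_i,c_i$ (so $a_i$ is non-adjacent exactly to $b_i$ and $c_i$). Then $a_i$ has degree $2m-2$ and each $b_j,c_j$ has degree $m-1$, so $\Delta(G_2)=2m-2$. I would take $K^{x}_{m,2m}$ with vertex sets $\{u_i\}$ and $\{w_j\}\cup\{w'_j\}$ mapped by $\varphi(u_i)=a_i$, $\varphi(w_j)=b_j$, $\varphi(w'_j)=c_j$. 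Edges of $K^{x}_{m,2m}$ that are already present in $G_2$ (i.e.\ $u_iw_j$ and $u_iw'_j$ with $j\ne i$) are routed as the single edge; the $x$ copies of each missing edge $u_iw_i$ (and similarly $u_iw'_i$) must be routed as length-three paths $a_i b_{\alpha} a_{\beta} b_i$ (respectively through $c$-vertices), using the $(m-1,2)$-arrangements as in Theorem~\ref{Theorem Lower Bound for Kmm}; this forces $x=(m-1)(m-2)$. A careful edge-by-edge count of how many routing paths pass through a fixed edge $(a_i,b_j)$ or $(a_i,c_j)$ of $G_2$ should give $cg(\varphi,\kappa)=(m-2)(m+2)$ again, exactly as before, since the detour structure is identical and the $b$- and $c$-sides do not interfere. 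Then $cr(G_2)\ge \dfrac{(m-1)^2(m-2)^2\,cr(K_{m,2m})}{(m-2)^2(m+2)^2}-\dfrac{3m}{2}(2m-2)^2$, and $cr(K_{m,2m})\ge 0.8594\lfloor\tfrac{m}{2}\rfloor\lfloor\tfrac{m-1}{2}\rfloor\lfloor\tfrac{2m}{2}\rfloor\lfloor\tfrac{2m-1}{2}\rfloor \ge 0.8594\,m(m-1)\lfloor\tfrac{m}{2}\rfloor\lfloor\tfrac{m-1}{2}\rfloor$; simplifying $\frac{(m-1)^2}{(m+2)^2}=\frac{1}{(1+3/(m-1))^2}$ and bounding $\frac{3m}{2}(2m-2)^2=6m(m-1)^2$ yields precisely the claimed inequality.

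The main obstacle I anticipate is verifying that the congestion is exactly $(m-2)(m+2)$ rather than something slightly larger: each edge $(a_i,b_j)$ of $G_2$ carries the one ``genuine'' edge $u_iw_j$, plus it can serve as the first leg $a_ib_\alpha$ of a detour for the copies of $u_iw_i$ (there are $m-2$ such arrangements with $\alpha=j$), plus it can serve as the middle leg $a_\beta b_i$-type segment for detours of $u_\beta w_\beta$ where the path visits $a_i$ — and one must check these tallies add to $(m-2)(m+2)=m^2-4$, uniformly over all edges, and that introducing the extra $c$-vertices (and the extra removed star edges $a_ic_i$) does not change the count on either the $b$-side or the $c$-side. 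This is the same bookkeeping as in Theorem~\ref{Theorem Lower Bound for Kmm} done twice in parallel, so I expect it to go through, but it is the only step requiring genuine care; everything after the congestion value is plugged in is routine algebraic simplification.
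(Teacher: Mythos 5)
Your proposal follows the paper's proof essentially verbatim: the same embedding of $K^{(m-1)(m-2)}_{m,2m}$ into $K_{m,2m}-mK_{1,2}$ with $\varphi(u_i)=a_i$, $\varphi(v_j)=b_j$, $\varphi(w_j)=c_j$, the detour paths $a_ib_{\alpha}a_{\beta}b_i$ and $a_ic_{\alpha}a_{\beta}c_i$ indexed by the $(m-1,2)$-arrangements, congestion $(m-2)(m+2)$, and the same application of Lemmas~\ref{Lemma congestion}--\ref{Lemma crossing of xKmn}. One small correction to your anticipated bookkeeping: a fixed edge $(a_i,b_j)$ carries all $(m-1)(m-2)$ parallel copies of $u_iv_j$ (not just one ``genuine'' edge), and together with the $3(m-2)$ detour legs this gives exactly $(m-1)(m-2)+3(m-2)=m^2-4=(m-2)(m+2)$.
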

\begin{proof} By Lemmas \ref{Lemma congestion}-\ref{Lemma crossing of xKmn}, we only need to construct an embedding $(\varphi,\kappa)$ of $K^{(m-1)(m-2)}_{m,2m}$ into $K_{m,2m}-mK_{1,2}$ with
congestion $cg(\varphi,\kappa)=(m-2)(m+2)$.

Let $\alpha^k_{i}\beta^k_{i}$ be the $k$-th $(m-1,2)$-arrangement, where $\alpha^k_{i},\beta^k_{i}\in \{0,1,2,\cdots,m-1\}-\{i\}$ and $\alpha^k_{i}\neq \beta^k_{i}$. Let
$$\begin{array}{llll}
V(K^{(m-1)(m-2)}_{m,2m})&=\{u_i,v_i,w_i\ |\ 0\leq i\leq m-1\},\\
E(K^{(m-1)(m-2)}_{m,2m})&=\{(u_i,v_j)^k,(u_i,w_j)^k\ |\ 0\leq i,j\leq m-1, 1\leq k\leq (m-1)(m-2)\},\\
V(K_{m,2m}-mK_{1,2})&=\{a_i,b_i,c_i\ |\ 0\leq i\leq m-1\},\\
E(K_{m,2m}-mK_{1,2})&=\{(a_i,b_j),(a_i,c_j)\ |\ 0\leq i\neq j\leq
m-1\}.
\end{array}$$
Let $\varphi (u_i)=a_i$, $\varphi (v_i)=b_i$, $\varphi (w_i)=c_i$,
$\kappa((u_i,v_i)^k)=P_{a_ib_{\alpha^k_{i}}a_{\beta^k_{i}}b_i}$,
$\kappa((u_i,w_i)^k)=P_{a_ic_{\alpha^k_{i}}a_{\beta^k_{i}}c_i}$ for
$0\leq i\leq m-1$, and $\kappa((u_i,v_j)^k)=(a_i,b_j)$,
$\kappa((u_i,w_j)^k)=(a_i,c_j)$ for $0\leq i\neq j\leq m-1$. Then
$cg_e(\varphi,\kappa)=(m-2)(m+2)$ for every $e\in
E(K^{(m-1)(m-2)}_{m,2m})$. This completes the proof of Theorem
\ref{Theorem Lower Bound for Km2m}.
\end{proof}

Let $D_P$ $(D_C)$ be an arbitrary drawing of $K_m\times P_n$
$(K_m\times C_n)$.  By Lemma \ref{lemma: E12}, we have $\nu(D_P)\geq
\sum_{j=0}^{n-2}\nu_{D_P}(E_j)$ ($\nu(D_C)\geq
\sum_{j=0}^{n-1}\nu_{D_C}(E_j)$). Since $(K_m\times P_n)[E^j]\cong
(K_m\times C_n)[E^j]\cong K_{m,m}-mK_2$ and $(K_m\times P_n)[E^j\cup
E^{j+1}]\cong (K_m\times C_n)[E^j\cup E^{j+1}]\cong
K_{m,2m}-mK_{1,2}$, where $G[X]$ denotes the subgraph of $G$ induced
by $X\subseteq E(G)$, by Theorems \ref{Theorem Lower Bound for Kmm}
and \ref{Theorem Lower Bound for Km2m}, we have

\begin{theorem}\label{Theorem Lower Bound for KmPn}
$$\begin{array}{llll}cr(K_{m}\times P_n)\geq&\left \{\begin{array}{llll}
\frac{n-2}{2}(\frac{0.8594}{(1+\frac{3}{m-1})^2}m(m-1)\lfloor\frac{m}{2}\rfloor\lfloor\frac{m-1}{2}\rfloor-6m(m-1)^2)&\\
+(\frac{0.8594}{(1+\frac{3}{m-1})^2}\lfloor\frac{m}{2}\rfloor ^2\lfloor\frac{m-1}{2}\rfloor ^2-m(m-1)^2)& \mbox{ for even } n\\
\frac{n-1}{2}(\frac{0.8594}{(1+\frac{3}{m-1})^2}m(m-1)\lfloor\frac{m}{2}\rfloor\lfloor\frac{m-1}{2}\rfloor-6m(m-1)^2)&\mbox{ for odd } n.
                       \end{array}
           \right.
\end{array}$$
\end{theorem}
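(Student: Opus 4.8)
The plan is to assemble the lower bound for $cr(K_m\times P_n)$ by combining the edge-decomposition $E(K_m\times P_n)=\bigcup_{j=0}^{n-2}E^j$ with Lemma~\ref{lemma: E12} and the two structural lower bounds already proved in Theorems~\ref{Theorem Lower Bound for Kmm} and~\ref{Theorem Lower Bound for Km2m}. The first observation, which is stated just before the theorem, is that for any drawing $D_P$ of $K_m\times P_n$ we have $\nu(D_P)=\nu_{D_P}(E(K_m\times P_n))\geq \sum_{j=0}^{n-2}\nu_{D_P}(E^j)$, since dropping the cross-terms $\nu_{D_P}(E^{j_1},E^{j_2})$ and the within-block pair counts for non-adjacent blocks only decreases the count; more usefully, we also have $\nu(D_P)\geq \sum \nu_{D_P}(E^{j}\cup E^{j+1})$ over a suitable collection of disjoint consecutive pairs, because the subgraph induced by $E^j\cup E^{j+1}$ is exactly a copy of $K_{m,2m}-mK_{1,2}$ (the shared vertex class being $V^{j+1}$), and disjoint such pairs contribute disjoint sets of crossings.

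The key step is to partition $\{0,1,\dots,n-2\}$ — the index set of the blocks — into consecutive pairs, using single blocks only when forced by parity. Concretely, if $n$ is even then $n-1$ is odd, so $\{0,\dots,n-2\}$ has $n-1$ elements and can be split into $\frac{n-2}{2}$ disjoint consecutive pairs $\{0,1\},\{2,3\},\dots$ plus one leftover single block; each pair gives an induced $K_{m,2m}-mK_{1,2}$ and the leftover gives an induced $K_{m,m}-mK_2$. Summing the corresponding lower bounds from Theorem~\ref{Theorem Lower Bound for Km2m} (applied $\frac{n-2}{2}$ times) and Theorem~\ref{Theorem Lower Bound for Kmm} (once) yields exactly the even-$n$ bound in the statement. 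If $n$ is odd then $n-1$ is even, so $\{0,\dots,n-2\}$ splits into $\frac{n-1}{2}$ disjoint consecutive pairs with no leftover, and summing $\frac{n-1}{2}$ copies of the Theorem~\ref{Theorem Lower Bound for Km2m} bound gives the odd-$n$ line. In both cases the crossings counted in distinct blocks or distinct pairs are genuinely distinct pairs of edges, so the bounds add without double-counting; this additivity is the content of Lemma~\ref{lemma: E12} applied iteratively.

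The main obstacle — though it is more of a bookkeeping subtlety than a genuine difficulty — is justifying that $\nu_{D_P}\big(\bigcup_j (E^{2j}\cup E^{2j+1})\big) \geq \sum_j \nu_{D_P}(E^{2j}\cup E^{2j+1})$, i.e. that restricting the drawing $D_P$ to the subgraph induced by $E^{2j}\cup E^{2j+1}$ is a legitimate drawing of $K_{m,2m}-mK_{1,2}$ to which Theorem~\ref{Theorem Lower Bound for Km2m} applies, and that the crossing sets for different $j$ are disjoint. The disjointness is immediate since $(E^{2j}\cup E^{2j+1})$ and $(E^{2j'}\cup E^{2j'+1})$ are disjoint edge sets for $j\neq j'$, and a crossing involving an edge of one pair and an edge of another is simply discarded (only helping the inequality). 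The isomorphism $(K_m\times P_n)[E^j\cup E^{j+1}]\cong K_{m,2m}-mK_{1,2}$ is asserted in the text preceding the theorem, so once one checks it carefully (vertex class $V^{j+1}$ of size $m$ on one side, $V^j\cup V^{j+2}$ of size $2m$ on the other, with the $m$ deleted paths $K_{1,2}$ corresponding to the forbidden equal-first-coordinate pairs), the rest is a direct substitution of the two explicit bounds and arithmetic to match the displayed formulas.
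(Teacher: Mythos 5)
Your proposal is correct and follows essentially the same route as the paper: decompose $E(K_m\times P_n)$ into the blocks $E^0,\dots,E^{n-2}$, group them into $\lfloor\frac{n-1}{2}\rfloor$ disjoint consecutive pairs (each inducing $K_{m,2m}-mK_{1,2}$, handled by Theorem~\ref{Theorem Lower Bound for Km2m}) plus, when $n$ is even, one leftover block inducing $K_{m,m}-mK_2$ (handled by Theorem~\ref{Theorem Lower Bound for Kmm}), and add the bounds using the superadditivity coming from Lemma~\ref{lemma: E12}. Your explicit justification of the parity bookkeeping and of the isomorphism $(K_m\times P_n)[E^j\cup E^{j+1}]\cong K_{m,2m}-mK_{1,2}$ is in fact more careful than the paper's one-paragraph derivation.
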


\begin{theorem}\label{Theorem Lower Bound for KmCn}
$$\begin{array}{llll}cr(K_{m}\times C_n)\geq&\left \{\begin{array}{llll}
\frac{n-1}{2}(\frac{0.8594}{(1+\frac{3}{m-1})^2}m(m-1)\lfloor\frac{m}{2}\rfloor\lfloor\frac{m-1}{2}\rfloor-6m(m-1)^2)&\\
+(\frac{0.8594}{(1+\frac{3}{m-1})^2}\lfloor\frac{m}{2}\rfloor ^2\lfloor\frac{m-1}{2}\rfloor ^2-m(m-1)^2)& \mbox{ for odd } n\\
\frac{n}{2}(\frac{0.8594}{(1+\frac{3}{m-1})^2}m(m-1)\lfloor\frac{m}{2}\rfloor\lfloor\frac{m-1}{2}\rfloor-6m(m-1)^2)&\mbox{ for even } n.
                       \end{array}
           \right.
\end{array}$$
\end{theorem}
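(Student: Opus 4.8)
The plan is to derive Theorem~\ref{Theorem Lower Bound for KmCn} as an almost immediate consequence of the decomposition of $E(K_m\times C_n)$ into the edge classes $E^0,E^1,\dots,E^{n-1}$ together with the two building-block lower bounds, Theorems~\ref{Theorem Lower Bound for Kmm} and~\ref{Theorem Lower Bound for Km2m}. First I would fix an arbitrary drawing $D_C$ of $K_m\times C_n$. The key structural observation, already recorded in the paragraph preceding the statement, is that the subgraph induced by a single class $E^j$ is isomorphic to $K_{m,m}-mK_2$, and the subgraph induced by two consecutive classes $E^j\cup E^{j+1}$ (indices mod $n$) is isomorphic to $K_{m,2m}-mK_{1,2}$. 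Restricting $D_C$ to these induced subgraphs yields drawings of them, so $\nu_{D_C}(E^j)\ge cr(K_{m,m}-mK_2)$ and $\nu_{D_C}(E^j\cup E^{j+1})\ge cr(K_{m,2m}-mK_{1,2})$ for every $j$.

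Next I would combine these local inequalities by a counting argument over a well-chosen collection of edge classes, using Lemma~\ref{lemma: E12} (and its obvious extension to disjoint unions) to pass from crossings within pieces to crossings in the whole drawing. For even $n$: partition $\{0,1,\dots,n-1\}$ into $n/2$ consecutive pairs $\{0,1\},\{2,3\},\dots,\{n-2,n-1\}$; since these $n/2$ edge-sets $E^{2i}\cup E^{2i+1}$ are pairwise disjoint, $\nu(D_C)\ge\sum_{i=0}^{n/2-1}\nu_{D_C}(E^{2i}\cup E^{2i+1})\ge \frac{n}{2}\,cr(K_{m,2m}-mK_{1,2})$, which is exactly the even-$n$ bound after substituting Theorem~\ref{Theorem Lower Bound for Km2m}. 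For odd $n$: take the $(n-1)/2$ disjoint pairs $\{0,1\},\{2,3\},\dots,\{n-3,n-2\}$ together with the single leftover class $E^{n-1}$, all pairwise disjoint, giving $\nu(D_C)\ge \frac{n-1}{2}\,cr(K_{m,2m}-mK_{1,2})+cr(K_{m,m}-mK_2)$, which is the odd-$n$ bound after substituting Theorems~\ref{Theorem Lower Bound for Km2m} and~\ref{Theorem Lower Bound for Kmm}. Since $D_C$ was arbitrary, the same bound holds for $cr(K_m\times C_n)$.

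The substantive content of the theorem is really in the two lemmas/theorems it invokes (the congestion argument of Lemma~\ref{Lemma congestion} applied to embeddings of the complete bipartite multigraphs, Theorems~\ref{Theorem Lower Bound for Kmm} and~\ref{Theorem Lower Bound for Km2m}), so this final step is essentially bookkeeping. The only place requiring a little care is the choice of which classes to group: one must ensure the chosen pieces form a family of \emph{pairwise edge-disjoint} subgraphs so that Lemma~\ref{lemma: E12} gives a clean additive lower bound with no double-counting and no cross terms to worry about — each crossing of $D_C$ is counted at most once. For even $n$ the pairing tiles the cycle of classes perfectly; for odd $n$ one class is unavoidably left as a singleton, which is precisely why the odd-$n$ formula carries the extra $K_{m,m}-mK_2$ term rather than another $K_{m,2m}-mK_{1,2}$ term. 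I do not anticipate a genuine obstacle here; the proof of Theorem~\ref{Theorem Lower Bound for KmPn} is entirely parallel (with the path the grouping uses classes $E^0,\dots,E^{n-2}$, so even $n$ leaves a singleton and odd $n$ tiles perfectly — the roles of the parities are swapped relative to the cycle case).
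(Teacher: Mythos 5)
Your proposal is correct and matches the paper's own (very terse) argument: the paragraph preceding the theorem establishes exactly the decomposition into pairwise disjoint consecutive pairs $E^j\cup E^{j+1}\cong K_{m,2m}-mK_{1,2}$ plus a leftover singleton $E^j\cong K_{m,m}-mK_2$ when the number of classes is odd, and then sums the lower bounds from Theorems~\ref{Theorem Lower Bound for Kmm} and~\ref{Theorem Lower Bound for Km2m}. Your write-up in fact makes explicit a step the paper leaves implicit (the displayed inequality in the paper only sums over single classes, which by itself would give a weaker bound), so no issues here.
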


\end{document}